\newcommand{\Z}{\mathbb Z}
\newcommand{\R}{\mathbb R}
\renewcommand{\P}{\mathbb P}
\newtheorem{thm}{Theorem}[section]
\newtheorem{prop}[thm]{Proposition}
\newtheorem{lemma}[thm]{Lemma}
\newtheorem{cor}[thm]{Corollary}
\newtheorem{remark}[thm]{Remark}
\numberwithin{figure}{section}
\numberwithin{equation}{section}
\newcommand{\bx}{\mathbf x}
\renewcommand{\k}{\kappa}
\renewcommand{\sl}{{\mathfrak{sl}}}
\renewcommand{\mod}{\rm ~mod~}
\begin{document}
\title{$T$-systems and the pentagram map}
\author{Rinat Kedem}\footnote{rinat@illinois.edu, fax: +1-217-333-9576}
\address{Department of Mathematics, University of Illinois, Urbana, IL 61821 USA}
\address{RK: rinat@illinois.edu}
\address{PV: vichitk1@illinois.edu}
\author{Panupong Vichitkunakorn}
\date{\today}

\begin{abstract}
  These notes summarize two different connections between two discrete
  integrable systems, the $A_d$ $T$-system and its infinite-rank
  analog, the octahedron relation, and the pentagram map and its
  various generalizations.
\end{abstract}
\maketitle
\section{Introduction}
The purpose of this article is to summarize two connections between (generalized) pentagram maps and the octahedron relation with special boundary conditions, also known as the $T$-system. The first connection is essentially the one found by Glick \cite{Glick} and by Gekhtman et. al. \cite{GSTV} in the two-dimensional case, and we clarify here the exact relation between the variables of octahedron relation and the various coordinates used in describing the pentagram maps. Essentially the pentagram map is the $Y$-system corresponding to the usual octahedron relation with initial conditions wrapped on a torus. We show how to unfold the $Y$-system so that the result is a quasi-periodic $T$-system on the same torus. The discrete dynamics of the pentagram map is inherited from the usual octahedron map. 

The second connection to $T$-systems is related to the Zamolodchikov (quasi-) periodicity phenomenon for the $A_d$ $T$-systems. It relates the solutions of the $T$-system exhibiting the Zamolodchikov periodicity phenomenon
to the lift of the projective coordinates of the polyhedron in projective $d$-space. It gives a new interpretation to the coefficients appearing in the linear recursion relation satisfied by the lifted coordinates in terms of generalized $q$-characters of $U_q(\widehat{\mathfrak sl}_{d+1})$.

The paper is organized as follows. In Section 1, we introduce the octahedron relation (the $T$-system with no boundary conditions) and the solutions using networks found in \cite{DFKT}. In Section 2 we introduce the generalized pentagram maps \cite{GSTV} and show how this map is related to the octahedron relation with quasi-periodic initial data, or its related $Y$-system with initial data wrapped on a torus. We also introduce the linear recursion relation satisfied by the lifted coordinates of the $n$-gon in projective $d$-space. In Section 3, we return to this linear recursion relation, and relate its coefficients as conserved quantities of the $A$-type $T$-system with wall type boundary conditions. Periodicity of the coefficients is a direct result of Zamolodchikov periodicity.

\vskip.1in
\noindent{\bf Acknowledgements} This research is supported in part by NSF grant DMS-1100929. We thank Philippe Di Francesco, Gloria Mar{\'{\i}} Beffa, Valentin Ovsienko and Richard Schwartz for helpful discussions. RK thanks the organizers of the conference FDIS13 where these results were presented.
\section{The octahedron relation and $T$-systems}
\subsection{A bilinear recursion relation}
Let $T$ be a function defined on the vertices of the lattice $\Z^3$. The octahedron relation is a recursion relation between the values of the function $T$ on a sublattice of $\Z^3$:
\begin{equation}\label{Tsys}
T_{i,j,k+1} T_{i,j,k-1} = T_{i,j+1,k} T_{i,j-1,k} + T_{i+1,j,k}T_{i-1,j,k}.
\end{equation}
For each choice of integers $(i,j,k)$, this is a relation between the values of $T$ on the vertices of an octahedron in $\Z^3$ centered at $(i,j,k)$. 
It relates even and odd vertices in $\Z^3$ separately, that is, those with $i+j+k=0\mod 2$ or $1\mod 2$. Without loss of generality, we may concentrate on one of the two sublattices in our discussion.

We consider the octahedron relation to be a discrete evolution. Given a set of initial data on a valid initial data surface ${\mathbf k} = (i,j,k(i,j))$, the fuction $T$ is determined everywhere else in $\Z^3$. By a valid initial data surface we mean one where $|k(i,j)-k(i\pm1,j)|=1$ and $|k(i,j)-k(i,j\pm1)|=1$ for all $i,j\in \Z$. Here we may assume that $i+j+k(i,j)$ is odd, for example. An example of a valid initial data surface is $(i,j,i+j+1\mod 2)$.

\subsection{Cluster algebra structure}
The octahedron relation appears in numerous contexts in combinatorics,  statistical mechanics and representation theory.  In the context in which it is used here, it is interesting to note its relation to cluster algebras in particular.

In fact, it is natural to
interpret the octahedron relation as a {\em mutation} in a coefficient-free cluster algebra \cite{FZ} of geometric type and infinite rank
\cite{DFK09}. The cluster algebra itself is most easily specified in terms of an initial quiver, which encodes the exchange matrix associated with a particular type of cluster seed data. We call this seed data {\em initial data} for the octahedron relation. 

For example, on the even sublattice all values $T$ are determined by the initial data $\{T_{i,j,i+j+1\mod 2}: i,j\in\Z\}$. Application of the octahedron relation give all other values of $T$. If we write $x_{i,j}= T_{i,j,i+j+1\mod 2}$ then for each $j,k$, the mutation
$$\mu_{i,j}(x_{i,j})= x'_{i,j}= \frac{x_{i,j+1}x_{i,j-1}+x_{i+1,j}x_{i-1,j}}{x_{i,j}}$$ is encoded by the exchange matrix 
\begin{equation}\label{bmatrix}
B_{i,j}^{i',j'} = (-1)^{i+j}(\delta_{i,i'}\delta_{j,j'\pm1} - \delta_{j,j'}\delta_{i,i'\pm1}).
\end{equation}
We illustrate the quiver associated with the initial exchange matrix $B$:
\begin{figure}
\input{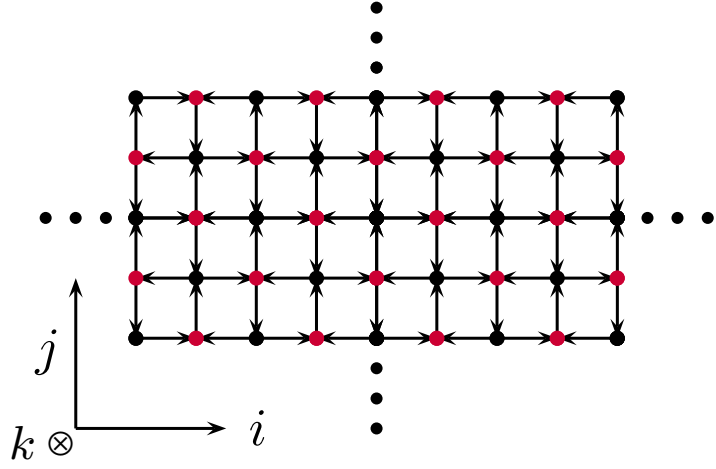}
\caption{The quiver associated with the exchange matrix $B$ of the octahedron relation for the initial data $T_{i,j,i+j+1\mod 2}$. The red dots correspond to the $k=0$ plane and the black ones to the $k=1$ plane. The quiver is infinite in the $i$ and $j$ directions.}
\label{octaquiver}
\end{figure}

All this just means that the octahedron relation is written as
$$x'_{i,j} x_{i,j} = \prod_{i',j'} x_{i',j'}^{[B_{i,j}^{i',j'}]_+} + \prod_{i',j'} x_{i',j'}^{[-B_{i,j}^{i',j'}]_+},
$$
where $[n]_+$ is the positive part of $n$. We interpret the new variable $x'_{i,j}$ to be $T_{i,j,2}$ if $i+j$ is odd, or $T_{i,j,-1}$ if $i+j$ is even.

The theorem \cite{DFK09} is as follows:
\begin{thm}
The octahedron relations are mutations in a cluster algebra which contains the initial cluster  consisting of the cluster variables $\{T_{i,j,i+j+1\mod 2}: i,j\in\Z\}$ and the exchange matrix $B$ of Equation \eqref{bmatrix}.
\end{thm}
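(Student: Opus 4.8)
The plan is to exhibit the seed $(\{x_{i,j}\}, B)$ explicitly, verify that a single mutation reproduces the octahedron relation, and then show that the exchange matrix is recovered (up to the shift identifying successive initial data surfaces) after a full wave of mutations, so that iterating the mutations sweeps out all of $\Z^3$ and realizes the $T$-system as cluster dynamics. The three ingredients are: $B$ is a legitimate exchange matrix, one mutation equals one octahedron relation, and the matrix is mutation-periodic.

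First I would check that $B$ of \eqref{bmatrix} is skew-symmetric and locally finite. For fixed $(i,j)$ the only nonzero entries are $B_{i,j}^{i,j\pm1}=(-1)^{i+j}$ and $B_{i,j}^{i\pm1,j}=-(-1)^{i+j}$, four in all, so every vertex of the quiver in Figure \ref{octaquiver} has valence four. Skew-symmetry $B_{i,j}^{i',j'}=-B_{i',j'}^{i,j}$ follows from the sign flip $(-1)^{i'+j'}=-(-1)^{i+j}$ accompanying any unit shift of a single index. Local finiteness is what makes mutation at a single vertex, and indeed at an infinite but pairwise non-adjacent set of vertices, well defined in the infinite-rank setting.

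Next I would substitute these four entries into the exchange relation $x'_{i,j}x_{i,j}=\prod x^{[B]_+}+\prod x^{[-B]_+}$ displayed before the theorem. For $i+j$ even the positive entries are $B_{i,j}^{i,j\pm1}$ and the negative entries are $B_{i,j}^{i\pm1,j}$, giving $x'_{i,j}x_{i,j}=x_{i,j+1}x_{i,j-1}+x_{i+1,j}x_{i-1,j}$; for $i+j$ odd the two groups switch roles and the identical relation results. Comparing with \eqref{Tsys} at the octahedron centered at $(i,j,0)$ in the even case and at $(i,j,1)$ in the odd case identifies $x'_{i,j}$ with $T_{i,j,-1}$ or $T_{i,j,2}$ respectively, matching the interpretation stated just before the theorem. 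Thus each octahedron relation across the initial surface is literally an exchange relation.

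The heart of the argument, and the step I expect to be the main obstacle, is to show the dynamics iterates consistently, i.e. that $B$ is mutation-periodic. Since arrows join only vertices of opposite parity, the even vertices are pairwise non-adjacent, so $\{\mu_{i,j}: i+j \text{ even}\}$ commute and may be applied as one simultaneous wave, advancing those variables from $T_{i,j,1}$ to $T_{i,j,-1}$. I would then track the quiver through the wave: each $\mu_{i,j}$ reverses its four incident arrows and inserts the length-two shortcuts $(i\pm1,j)\to(i,j\pm1)$ through $(i,j)$, after which all two-cycles cancel. The delicate point is that the shortcuts land between diagonally placed \emph{odd} vertices, and each such pair receives two opposite shortcuts, one from each of the two even vertices on the complementary diagonal; for instance $(i-1,j)\to(i,j-1)$ from $\mu_{i,j}$ is cancelled by $(i,j-1)\to(i-1,j)$ from $\mu_{i-1,j-1}$. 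Verifying that these cancel in pairs shows the resulting quiver has no odd-odd arrows and is exactly the reversed matrix $-B$, which is again an octahedron quiver based on the shifted staircase surface with even vertices at $k=-1$ and odd vertices at $k=0$. The argument then repeats verbatim, and alternating waves recover every $T_{i,j,k}$ as a cluster variable; this periodicity is precisely what upgrades the pointwise identification of the previous paragraph into the assertion that the entire octahedron relation is mutation in a single cluster algebra.
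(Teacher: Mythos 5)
Your proposal is correct and follows the same route as the paper: identify the exchange relation for $B$ with the octahedron relation on the initial-data surface (your parity bookkeeping for the entries of $B$ and the identification of $x'_{i,j}$ with $T_{i,j,-1}$ or $T_{i,j,2}$ matches the paper exactly), then establish that the quiver is restored so the dynamics can be iterated. You in fact go further than the text, which only asserts the restoration step with a citation to \cite{DFK09} (``repeated applications of the octahedron relations locally restore the original quiver''); your simultaneous-wave argument -- commuting mutations at the pairwise non-adjacent even vertices, each arrow reversed exactly once, and the two opposite diagonal shortcuts between any odd--odd pair cancelling because they come from the two even vertices on the complementary diagonal -- is a correct proof of that mutation-periodicity, yielding $-B$ and hence the claimed iteration.
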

The subset of mutations among all cluster algebra mutations which correspond to one of the octahedron relations correspond to a mutation at a vertex of the quiver which has two incoming and two outgoing arrows. Mutations of the quiver itself create other types of vertices but repeated applications of the octahedron relations locally restore the original quiver.

\subsection{Solutions of the octahedron relations via network matrices}

Solving the octahedron relation means providing an explicit expression for any variable $T_{i',j',k}$ (assume without loss of generality that $i'+j'+k\equiv 1\mod 2$) in terms of 
the variables of the initial data $\bx_0=\{T_{i,j,i+j+1\mod 2}: i,j\in\Z \}$. In fact, since the octahedron relation is a mutation in a cluster algebra, the solutions are all cluster variables and hence are Laurent polynomials (instead of just rational functions) \cite{FZ} of a finite subset of the initial data variables in $\bx_0$. 

Moreover, by explicitly solving the system, we can show that its coefficients are non-negative integers \cite{DFKT}, in line with the general positivity conjecture of cluster algebras.

We give an expression for the explicit solution in the particularly simple case we consider here. This version of the solution is found in \cite{DF} and we do refer the reader to that resource for the details of this solution. 

Assume $k>1$ (for $k<0$, a symmetric argument holds). Then we can visualize the point $i',j',k$ as a point in a plane above the initial data plane $\{i,j,i+j+1\mod 2, i,j\in \Z\}$. The initial data surface is in bijection with a weighted network, whose edge weights are monomials in the initial data, as follows:
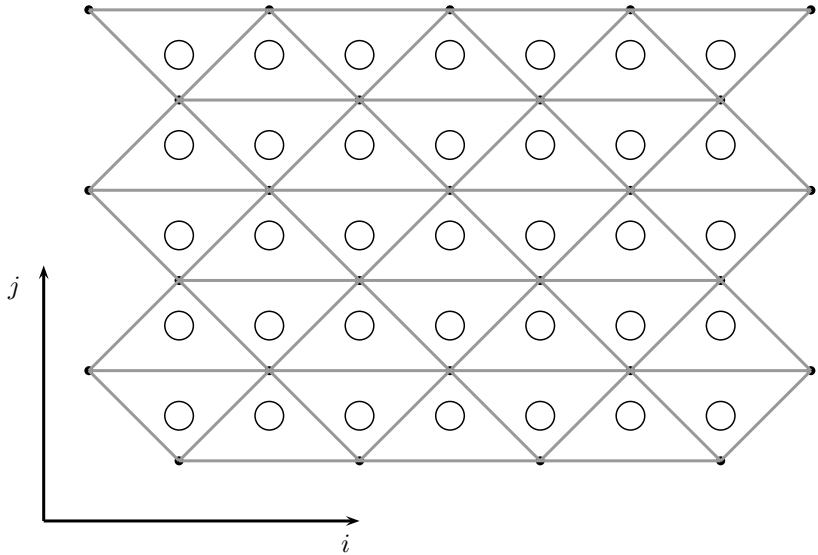
\begin{figure}
\scalebox{1} 
{
\begin{pspicture}(0,-3.683711)(10.861015,3.6637108)
\definecolor{color97}{rgb}{0.6,0.6,0.6}
\psdots[dotsize=0.12](1.1810156,3.583711)
\psdots[dotsize=0.12](3.5810156,3.583711)
\psdots[dotsize=0.12](5.9810157,3.583711)
\psdots[dotsize=0.12](4.7810154,2.3837109)
\psdots[dotsize=0.12](8.381016,3.583711)
\psdots[dotsize=0.12](10.781015,3.583711)
\psdots[dotsize=0.12](2.3810155,2.3837109)
\psdots[dotsize=0.12](7.1810155,2.3837109)
\psdots[dotsize=0.12](9.581016,2.3837109)
\psdots[dotsize=0.12](1.1810156,1.1837109)
\psdots[dotsize=0.12](3.5810156,1.1837109)
\psdots[dotsize=0.12](5.9810157,1.1837109)
\psdots[dotsize=0.12](4.7810154,-0.016289063)
\psdots[dotsize=0.12](8.381016,1.1837109)
\psdots[dotsize=0.12](10.781015,1.1837109)
\psdots[dotsize=0.12](2.3810155,-0.016289063)
\psdots[dotsize=0.12](7.1810155,-0.016289063)
\psdots[dotsize=0.12](9.581016,-0.016289063)
\psdots[dotsize=0.12](1.1810156,-1.216289)
\psdots[dotsize=0.12](3.5810156,-1.216289)
\psdots[dotsize=0.12](5.9810157,-1.216289)
\psdots[dotsize=0.12](4.7810154,-2.416289)
\psdots[dotsize=0.12](8.381016,-1.216289)
\psdots[dotsize=0.12](10.781015,-1.216289)
\psdots[dotsize=0.12](2.3810155,-2.416289)
\psdots[dotsize=0.12](7.1810155,-2.416289)
\psdots[dotsize=0.12](9.581016,-2.416289)
\psline[linewidth=0.04cm,linecolor=color97](1.1810156,3.583711)(10.781015,3.583711)
\psline[linewidth=0.04cm,linecolor=color97](2.3810155,2.3837109)(9.581016,2.3837109)
\psline[linewidth=0.04cm,linecolor=color97](1.1810156,1.1837109)(10.781015,1.1837109)
\psline[linewidth=0.04cm,linecolor=color97](2.3810155,-0.016289063)(9.581016,-0.016289063)
\psline[linewidth=0.04cm,linecolor=color97](1.1810156,-1.216289)(10.781015,-1.216289)
\psline[linewidth=0.04cm,linecolor=color97](2.3810155,-2.416289)(9.581016,-2.416289)
\psline[linewidth=0.04cm,linecolor=color97](3.5810156,3.583711)(1.1810156,1.1837109)
\psline[linewidth=0.04cm,linecolor=color97](5.9810157,3.583711)(1.1810156,-1.216289)
\psline[linewidth=0.04cm,linecolor=color97](8.381016,3.583711)(2.3810155,-2.416289)
\psline[linewidth=0.04cm,linecolor=color97](10.781015,3.583711)(4.7810154,-2.416289)
\psline[linewidth=0.04cm,linecolor=color97](10.781015,1.1837109)(7.1810155,-2.416289)
\psline[linewidth=0.04cm,linecolor=color97](10.781015,-1.216289)(9.581016,-2.416289)
\psline[linewidth=0.04cm,linecolor=color97](3.5810156,3.583711)(9.581016,-2.416289)
\psline[linewidth=0.04cm,linecolor=color97](5.9810157,3.583711)(9.581016,-0.016289063)
\psline[linewidth=0.04cm,linecolor=color97](8.381016,3.583711)(10.781015,1.1837109)
\psline[linewidth=0.04cm,linecolor=color97](1.1810156,3.583711)(7.1810155,-2.416289)
\psline[linewidth=0.04cm,linecolor=color97](1.1810156,1.1837109)(4.7810154,-2.416289)
\psline[linewidth=0.04cm,linecolor=color97](1.1810156,-1.216289)(2.3810155,-2.416289)
\psline[linewidth=0.04cm,linecolor=color97](9.581016,-0.016289063)(10.781015,-1.216289)
\psdots[dotsize=0.4,fillstyle=solid,dotstyle=o](2.3810155,-1.8162891)
\psdots[dotsize=0.4,fillstyle=solid,dotstyle=o](3.5810156,-1.8162891)
\psdots[dotsize=0.4,fillstyle=solid,dotstyle=o](4.7810154,-1.8162891)
\psdots[dotsize=0.4,fillstyle=solid,dotstyle=o](5.9810157,-1.8162891)
\psdots[dotsize=0.4,fillstyle=solid,dotstyle=o](7.1810155,-1.8162891)
\psdots[dotsize=0.4,fillstyle=solid,dotstyle=o](8.381016,-1.8162891)
\psdots[dotsize=0.4,fillstyle=solid,dotstyle=o](9.581016,-1.8162891)
\psdots[dotsize=0.4,fillstyle=solid,dotstyle=o](9.581016,-0.6162891)
\psdots[dotsize=0.4,fillstyle=solid,dotstyle=o](8.381016,-0.6162891)
\psdots[dotsize=0.4,fillstyle=solid,dotstyle=o](7.1810155,-0.6162891)
\psdots[dotsize=0.4,fillstyle=solid,dotstyle=o](5.9810157,-0.6162891)
\psdots[dotsize=0.4,fillstyle=solid,dotstyle=o](4.7810154,-0.6162891)
\psdots[dotsize=0.4,fillstyle=solid,dotstyle=o](3.5810156,-0.6162891)
\psdots[dotsize=0.4,fillstyle=solid,dotstyle=o](2.3810155,-0.6162891)
\psdots[dotsize=0.4,fillstyle=solid,dotstyle=o](2.3810155,0.5837109)
\psdots[dotsize=0.4,fillstyle=solid,dotstyle=o](3.5810156,0.5837109)
\psdots[dotsize=0.4,fillstyle=solid,dotstyle=o](4.7810154,0.5837109)
\psdots[dotsize=0.4,fillstyle=solid,dotstyle=o](5.9810157,0.5837109)
\psdots[dotsize=0.4,fillstyle=solid,dotstyle=o](7.1810155,0.5837109)
\psdots[dotsize=0.4,fillstyle=solid,dotstyle=o](8.381016,0.5837109)
\psdots[dotsize=0.4,fillstyle=solid,dotstyle=o](9.581016,0.5837109)
\psdots[dotsize=0.4,fillstyle=solid,dotstyle=o](2.3810155,1.783711)
\psdots[dotsize=0.4,fillstyle=solid,dotstyle=o](3.5810156,1.783711)
\psdots[dotsize=0.4,fillstyle=solid,dotstyle=o](4.7810154,1.783711)
\psdots[dotsize=0.4,fillstyle=solid,dotstyle=o](5.9810157,1.783711)
\psdots[dotsize=0.4,fillstyle=solid,dotstyle=o](7.1810155,1.783711)
\psdots[dotsize=0.4,fillstyle=solid,dotstyle=o](8.381016,1.783711)
\psdots[dotsize=0.4,fillstyle=solid,dotstyle=o](9.581016,1.783711)
\psdots[dotsize=0.4,fillstyle=solid,dotstyle=o](2.3810155,2.983711)
\psdots[dotsize=0.4,fillstyle=solid,dotstyle=o](3.5810156,2.983711)
\psdots[dotsize=0.4,fillstyle=solid,dotstyle=o](4.7810154,2.983711)
\psdots[dotsize=0.4,fillstyle=solid,dotstyle=o](5.9810157,2.983711)
\psdots[dotsize=0.4,fillstyle=solid,dotstyle=o](7.1810155,2.983711)
\psdots[dotsize=0.4,fillstyle=solid,dotstyle=o](8.381016,2.983711)
\psdots[dotsize=0.4,fillstyle=solid,dotstyle=o](9.581016,2.983711)
\psline[linewidth=0.04cm,arrowsize=0.05291667cm 2.0,arrowlength=1.4,arrowinset=0.4]{->}(0.58101565,-3.216289)(4.7810154,-3.216289)
\psline[linewidth=0.04cm,arrowsize=0.05291667cm 2.0,arrowlength=1.4,arrowinset=0.4]{->}(0.58101565,-3.216289)(0.58101565,0.18371093)
\usefont{T1}{ptm}{m}{n}
\rput(4.5924706,-3.5112891){$i$}
\usefont{T1}{ptm}{m}{n}
\rput(0.1924707,-0.11128906){$j$}
\end{pspicture} 
}
\caption{The network corresponding to the initial data surface. The initial data $T_{i,j,i+j+1\mod 2}$ is situated at the white circles on the lattice faces. The network is the triangular lattice, where all edges are assumed to be oriented from left to right. Its weights are given in terms of the face variables.}
\label{fig:network}
\end{figure}
The edge weights are given in terms of the face variables as follows:
\begin{figure}
\scalebox{1} 
{
\begin{pspicture}(0,-1.82)(4.82,1.82)
\psline[linewidth=0.04cm](2.4,0.6)(1.2,-0.6)
\psline[linewidth=0.04cm](1.2,-0.6)(3.6,-0.6)
\psline[linewidth=0.04cm](2.4,0.6)(3.6,-0.6)
\psline[linewidth=0.04cm,linestyle=dashed,dash=0.16cm 0.16cm](2.4,0.6)(3.6,1.8)
\psline[linewidth=0.04cm,linestyle=dashed,dash=0.16cm 0.16cm](2.4,0.6)(4.8,0.6)
\psline[linewidth=0.04cm,linestyle=dashed,dash=0.16cm 0.16cm](1.2,1.8)(2.4,0.6)
\psline[linewidth=0.04cm,linestyle=dashed,dash=0.16cm 0.16cm](1.2,1.8)(3.6,1.8)
\psline[linewidth=0.04cm,linestyle=dashed,dash=0.16cm 0.16cm](4.8,0.6)(2.4,-1.8)
\psline[linewidth=0.04cm,linestyle=dashed,dash=0.16cm 0.16cm](2.4,-1.8)(0.0,0.6)
\psline[linewidth=0.04cm,linestyle=dashed,dash=0.16cm 0.16cm](0.0,0.6)(2.4,0.6)
\psline[linewidth=0.04cm,linestyle=dashed,dash=0.16cm 0.16cm](1.2,-0.6)(0.0,-1.8)
\psline[linewidth=0.04cm,linestyle=dashed,dash=0.16cm 0.16cm](0.0,-1.8)(2.4,-1.8)
\usefont{T1}{ptm}{m}{n}
\rput(1.273125,-1.295){a}
\usefont{T1}{ptm}{m}{n}
\rput(1.2241992,0.145){b}
\usefont{T1}{ptm}{m}{n}
\rput(2.3878515,1.365){c}
\usefont{T1}{ptm}{m}{n}
\rput(3.564463,0.165){d}
\usefont{T1}{ptm}{m}{n}
\rput(2.3881152,-1.015){e}
\usefont{T1}{ptm}{m}{n}
\rput(2.3912792,-0.095){f}
\psdots[dotsize=0.6,fillstyle=solid,dotstyle=square](2.94,0.06)
\usefont{T1}{ptm}{m}{n}
\rput(2.9214551,0.065){$\frac{c}{d}$}
\psdots[dotsize=0.6,fillstyle=solid,dotstyle=square](1.86,0.08)
\usefont{T1}{ptm}{m}{n}
\rput(1.8314551,0.105){$\frac{a}{f}$}
\psdots[dotsize=0.6,fillstyle=solid,dotstyle=square](2.38,-0.58)
\usefont{T1}{ptm}{m}{n}
\rput(2.4314551,-0.555){$\frac{ab}{ef}$}
\end{pspicture} 
}
\caption{The edge weights of the network in terms of the face variables. The variables $a,b,c$ etc. are the initial data $T_{i,j,i+j\mod 2}$ and the weights are positive Laurent monomials in these variables.}
\label{fig:weights}
\end{figure}
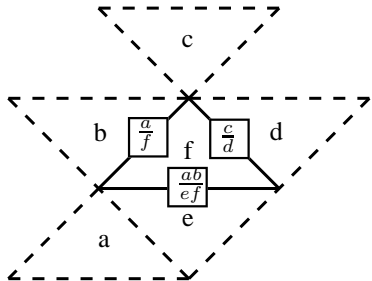
\begin{figure}
\scalebox{1} 
{
\begin{pspicture}(0,-5.33)(10.359043,5.31)
\definecolor{color712}{rgb}{0.6,0.6,0.6}
\psdots[dotsize=0.12](5.1427736,4.45)
\psdots[dotsize=0.12](3.9427733,3.25)
\psdots[dotsize=0.12](6.3427734,3.25)
\psdots[dotsize=0.12](2.7427735,-2.75)
\psdots[dotsize=0.12](2.7427735,2.05)
\psdots[dotsize=0.12](5.1427736,2.05)
\psdots[dotsize=0.12](3.9427733,0.85)
\psdots[dotsize=0.12](7.5427732,2.05)
\psdots[dotsize=0.12](3.9427733,-3.95)
\psdots[dotsize=0.12](1.5427735,0.85)
\psdots[dotsize=0.12](6.3427734,0.85)
\psdots[dotsize=0.12](8.742773,0.85)
\psdots[dotsize=0.12](0.34277344,-0.35)
\psdots[dotsize=0.12](2.7427735,-0.35)
\psdots[dotsize=0.12](5.1427736,-0.35)
\psdots[dotsize=0.12](3.9427733,-1.55)
\psdots[dotsize=0.12](7.5427732,-0.35)
\psdots[dotsize=0.12](9.942774,-0.35)
\psdots[dotsize=0.12](1.5427735,-1.55)
\psdots[dotsize=0.12](6.3427734,-1.55)
\psdots[dotsize=0.12](8.742773,-1.55)
\psline[linewidth=0.04cm,linecolor=color712](3.9427733,3.25)(6.3427734,3.25)
\psline[linewidth=0.04cm,linecolor=color712](2.7427735,2.05)(7.5427732,2.05)
\psline[linewidth=0.04cm,linecolor=color712](1.5427735,0.85)(8.742773,0.85)
\psline[linewidth=0.04cm,linecolor=color712](0.34277344,-0.35)(9.942774,-0.35)
\psline[linewidth=0.04cm,linecolor=color712](1.5427735,-1.55)(8.742773,-1.55)
\psline[linewidth=0.04cm,linecolor=color712](5.1427736,4.45)(0.34277344,-0.35)
\psline[linewidth=0.04cm,linecolor=color712](7.5427732,2.05)(2.7427735,-2.75)
\psline[linewidth=0.04cm,linecolor=color712](8.742773,0.85)(3.9427733,-3.95)
\psline[linewidth=0.04cm,linecolor=color712](9.942774,-0.35)(5.1427736,-5.15)
\psline[linewidth=0.04cm,linecolor=color712](3.9427733,3.25)(8.742773,-1.55)
\psline[linewidth=0.04cm,linecolor=color712](5.1427736,4.45)(8.742773,0.85)
\psline[linewidth=0.04cm,linecolor=color712](2.7427735,2.05)(7.5427732,-2.75)
\psline[linewidth=0.04cm,linecolor=color712](1.5427735,0.85)(6.3427734,-3.95)
\psline[linewidth=0.04cm,linecolor=color712](0.34277344,-0.35)(5.1427736,-5.15)
\psline[linewidth=0.04cm,linecolor=color712](8.742773,0.85)(9.942774,-0.35)
\psdots[dotsize=0.4,fillstyle=solid,dotstyle=o](1.5427735,-0.95)
\psdots[dotsize=0.4,fillstyle=solid,dotstyle=o](2.7427735,-0.95)
\psdots[dotsize=0.4,fillstyle=solid,dotstyle=o](3.9427733,-0.95)
\psdots[dotsize=0.4,fillstyle=solid,dotstyle=o](5.1427736,-0.95)
\psdots[dotsize=0.4,fillstyle=solid,dotstyle=o](6.3427734,-0.95)
\psdots[dotsize=0.4,fillstyle=solid,dotstyle=o](7.5427732,-0.95)
\psdots[dotsize=0.4](8.742773,-0.95)
\psdots[dotsize=0.4,fillstyle=solid,dotstyle=o](8.742773,0.25)
\psdots[dotsize=0.4,fillstyle=solid,dotstyle=o](7.5427732,0.25)
\psdots[dotsize=0.4,fillstyle=solid,dotstyle=o](6.3427734,0.25)
\psdots[dotsize=0.4,dotstyle=otimes](5.1427736,0.25)
\psdots[dotsize=0.4,fillstyle=solid,dotstyle=o](3.9427733,0.25)
\psdots[dotsize=0.4,fillstyle=solid,dotstyle=o](2.7427735,0.25)
\psdots[dotsize=0.4,fillstyle=solid,dotstyle=o](1.5427735,0.25)
\psdots[dotsize=0.4,fillstyle=solid,dotstyle=o](2.7427735,1.45)
\psdots[dotsize=0.4,fillstyle=solid,dotstyle=o](3.9427733,1.45)
\psdots[dotsize=0.4,fillstyle=solid,dotstyle=o](5.1427736,1.45)
\psdots[dotsize=0.4,fillstyle=solid,dotstyle=o](6.3427734,1.45)
\psdots[dotsize=0.4,fillstyle=solid,dotstyle=o](7.5427732,1.45)
\psdots[dotsize=0.4,fillstyle=solid,dotstyle=o](3.9427733,2.65)
\psdots[dotsize=0.4,fillstyle=solid,dotstyle=o](5.1427736,2.65)
\psdots[dotsize=0.4,fillstyle=solid,dotstyle=o](6.3427734,2.65)
\psdots[dotsize=0.4,fillstyle=solid,dotstyle=o](5.1427736,3.85)
\psline[linewidth=0.04cm,linecolor=color712](6.3427734,3.25)(1.5427735,-1.55)
\psdots[dotsize=0.12](5.1427736,-5.15)
\psdots[dotsize=0.12](6.3427734,-3.95)
\psdots[dotsize=0.12](7.5427732,-2.75)
\psdots[dotsize=0.4,fillstyle=solid,dotstyle=o](3.9427733,-2.15)
\psdots[dotsize=0.4,fillstyle=solid,dotstyle=o](5.1427736,-2.15)
\psdots[dotsize=0.4,fillstyle=solid,dotstyle=o](6.3427734,-2.15)
\psline[linewidth=0.04cm,linecolor=color712](2.7427735,-2.75)(7.5427732,-2.75)
\psline[linewidth=0.04cm,linecolor=color712](3.9427733,-3.95)(6.3427734,-3.95)
\psdots[dotsize=0.4,fillstyle=solid,dotstyle=o](2.7427735,-2.15)
\psdots[dotsize=0.4](7.5427732,-2.15)
\psdots[dotsize=0.4,fillstyle=solid,dotstyle=o](3.9427733,-3.35)
\psdots[dotsize=0.4,fillstyle=solid,dotstyle=o](5.1427736,-3.35)
\psdots[dotsize=0.4](6.3427734,-3.35)
\psdots[dotsize=0.4](5.1427736,-4.55)
\usefont{T1}{ptm}{m}{n}
\rput(0.05229492,-0.385){5}
\usefont{T1}{ptm}{m}{n}
\rput(1.3037207,-1.685){4}
\usefont{T1}{ptm}{m}{n}
\rput(2.5305371,-2.865){3}
\usefont{T1}{ptm}{m}{n}
\rput(3.7214355,-4.045){2}
\usefont{T1}{ptm}{m}{n}
\rput(4.789707,-5.165){1}
\usefont{T1}{ptm}{m}{n}
\rput(5.489707,-5.185){1}
\usefont{T1}{ptm}{m}{n}
\rput(6.6014357,-4.025){2}
\usefont{T1}{ptm}{m}{n}
\rput(7.770537,-2.885){3}
\usefont{T1}{ptm}{m}{n}
\rput(8.96372,-1.705){4}
\usefont{T1}{ptm}{m}{n}
\rput(10.232295,-0.445){5}
\psdots[dotsize=0.4](9.942774,0.29)
\psdots[dotsize=0.4,fillstyle=solid,dotstyle=o](8.802773,1.55)
\psdots[dotsize=0.4,fillstyle=solid,dotstyle=o](7.4427733,2.71)
\psdots[dotsize=0.4,fillstyle=solid,dotstyle=o](6.3427734,3.91)
\psdots[dotsize=0.4,fillstyle=solid,dotstyle=o](3.9227734,3.87)
\psdots[dotsize=0.4,fillstyle=solid,dotstyle=o](2.7427735,2.67)
\psdots[dotsize=0.4,fillstyle=solid,dotstyle=o](1.5627735,1.51)
\psdots[dotsize=0.4,fillstyle=solid,dotstyle=o](0.34277344,0.25)
\psdots[dotsize=0.4,fillstyle=solid,dotstyle=o](5.1227736,5.09)
\end{pspicture} 
}
\caption{The network which contributes to the expression $T_{i,j,5}$ a distance $k=5$ above the initial data surface. The marked circle at the center is the point $i,j$ on the surface in the $i$-$j$-plane.}
\label{fig:diamond}
\end{figure}
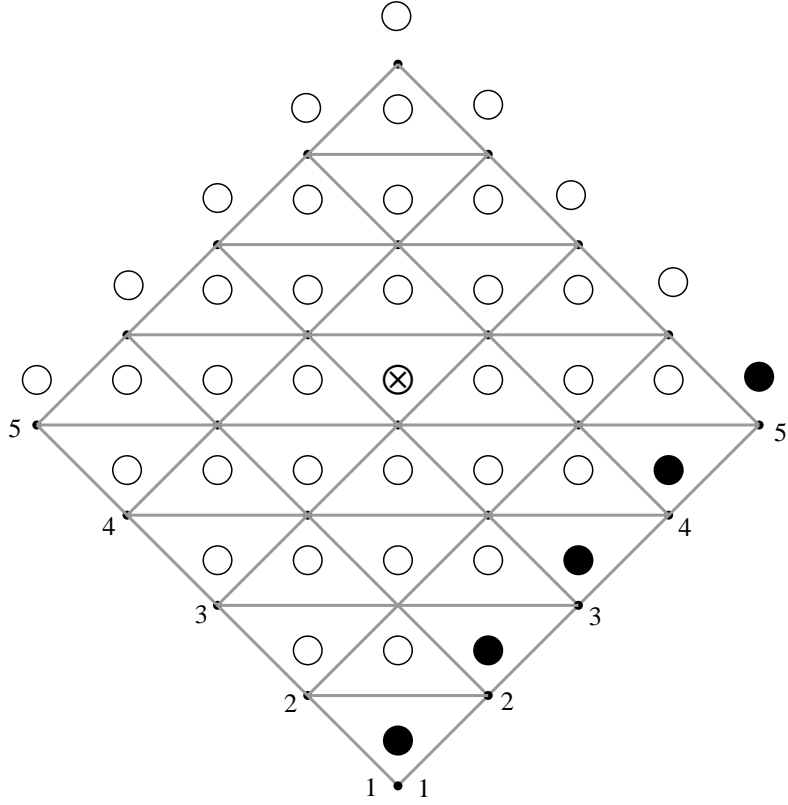
The point $i,j,k$ with $i+j+k$ odd is situated above the point $i,j,i+j+1\mod 2$ on the initial data plane. Consider a diamond-shaped subset of the initial data plane centered at this point, with sides of length $k$, and its corresponding weighted network. 

The network matrix of size $k\times k$, $N_k(i,j)$, is the matrix whose  $a,b$ entry is the partition function of paths on the weighted matrix from point $a$ on the left of the network in Figure \ref{fig:diamond} to point $b$ on the right. The determinant of this matrix is the partition function of $k$ non-intersecting paths on the network from the points $(1,...,k)$ on the left to the points $(1,...,k)$ on the right \cite{GV}. The partition function is just the sum of products of edge weights, hence a positive Laurent polynomial in the initial data.

Finally the variable $T_{i,j,k}$ is the determinant  of $N_k(i,j)$, multiplied by the $k$ variables denoted by black dots along the southeast edge of the diamond in Figure \ref{fig:diamond}.
\subsection{Associated $Y$-system}
There is a closely related discrete evolution to the $T$-system or the octahedron relation called the $Y$-system. It was originally encountered in the context of thermedynamic Bethe ansatz in the 80's \cite{Z} as a relation among the fugacities of quasiparticles in conformal field theories. It can be obtained directly from the $T$-system as follows \cite{KNS}.

Define the variables 
$$
Y_{i,j,k} = \frac{T_{i+1,j,k}T_{i-1,j,k}}{T_{i,j+1,k}T_{i,j-1,k}}.
$$
Dividing both sides of the octahedron relation \eqref{Tsys} by $T_{i+1,j,k}T_{i-1,j,k}$ and multiplying the resulting equations for two separate values of $(i,j,k)$, we get a so-called $Y$-system:
\begin{equation}\label{Ysys}
Y_{i,j,k+1}Y_{i,j,k-1} = \frac{(1+Y_{i+1,j,k})(1+Y_{i-1,j,k})}{(1+Y^{-1}_{i,j+1,k})(1+Y^{-1}_{i,j-1,k})}.
\end{equation}

\begin{remark} The original $Y$-system associated with the Lie algebra $A_d$ has a restricted set of values for $i$ with the appropriate boundary conditions, that is $1\leq i \leq d$. The system \eqref{Ysys} is again a recursion relation in $\Z^3$. 
\end{remark}

The $Y$-system \eqref{Ysys} is identified with the mutation relation for {\em coefficients} ($Y$-variables in the language of \cite{FZ} or $x$-variables in the laguage of \cite{FG}.) The exchange matrix $B$ or the quiver associated with it are the same as for the $T$-system. The birational expression for $Y$ in terms of the $T$-variables means that the map between the two sets of variables is not bijective but projective. Therefore, given a $Y$-system, there are many choices of ``unfolding" it back into a $T$-system. We will give explicit formulas for sufficiently general such unfoldings in the next section.
\section{The higher pentagram map: coordinates, cluster algebra and Y-system}
\subsection{The (higher) pentagram map}
The pentagram map is a discrete evolution equation acting on points in $\R\P^2$. In its original version it was introduced by Richard Schwartz in series
of papers \cite{Schwartz,Schwartz2,Schwartz3}. See also \cite{OST} for a concise review. This map acts on a (twisted) polygon with $n$ vertices to give another twisted $n$-gon, whose vertices are the intersections of the short diagonals of the original polygon (connecting next-nearest neighbor vertices modulo $n$). See Fig.~\ref{Fig:pentagrammap} for an example.

A twisted $n-$gon is a sequence of vertices $(v_{i})_{i\in\mathbb{Z}}$ in
$\mathbb{RP}^{2}$ together with a monodromy $M$, a projective automorphism,
such that $v_{i+n}=M\circ v_{i}$ for all $i\in\mathbb{Z}.$ Denote by $\mathcal{P}_{n}$
the space of projective equivalence classes of twisted $n-$gons, and let $T:\mathcal{P}_{n}\rightarrow\mathcal{P}_{n}$ denote
the pentagram map. The $i$-th vertex of the image, $T(v_{i}),$ is
then defined to be the intersection of the two diagonals $\overline{v_{i-1}v_{i+1}}$
and $\overline{v_{i}v_{i+2}},$ i.e., 
\[
T(v_{i})=\overline{v_{i-1}v_{i+1}}\cap\overline{v_{i}v_{i+2}}.
\]
as in Fig.~\ref{Fig:localpenta}.

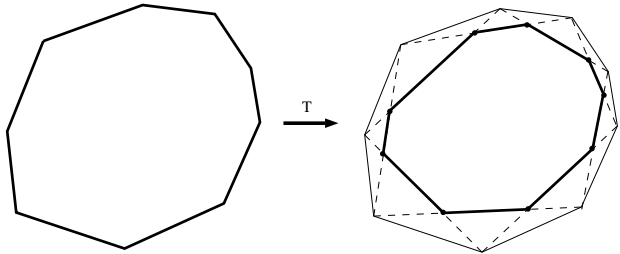
\begin{figure}
\scalebox{0.6} 
{
\begin{pspicture}(0,-2.75)(13.53,2.77)
\psline[linewidth=0.02](8.72,1.86)(10.92,2.66)(12.52,2.46)(13.32,1.26)(13.52,0.06)(12.72,-1.74)(10.52,-2.74)(8.12,-1.94)(7.92,-0.14)(8.72,1.86)
\psline[linewidth=0.02cm,linestyle=dashed,dash=0.16cm 0.16cm](10.92,2.66)(13.32,1.26)
\psline[linewidth=0.02cm,linestyle=dashed,dash=0.16cm 0.16cm](13.52,0.06)(10.52,-2.74)
\psline[linewidth=0.02cm,linestyle=dashed,dash=0.16cm 0.16cm](12.72,-1.74)(8.12,-1.94)
\psline[linewidth=0.02cm,linestyle=dashed,dash=0.16cm 0.16cm](10.52,-2.74)(7.92,-0.14)
\psline[linewidth=0.02cm,linestyle=dashed,dash=0.16cm 0.16cm](8.12,-1.94)(8.72,1.86)
\psline[linewidth=0.02cm,linestyle=dashed,dash=0.16cm 0.16cm](7.92,-0.14)(10.92,2.66)
\psline[linewidth=0.02cm,linestyle=dashed,dash=0.16cm 0.16cm](8.72,1.86)(12.52,2.46)
\psdots[dotsize=0.12](10.36,2.12)
\psdots[dotsize=0.12](11.52,2.3)
\psdots[dotsize=0.12](12.88,1.52)
\psdots[dotsize=0.12](12.96,-0.44)
\psdots[dotsize=0.12](11.54,-1.78)
\psdots[dotsize=0.12](9.66,-1.86)
\psdots[dotsize=0.12](8.32,-0.56)
\psdots[dotsize=0.12](8.48,0.38)
\psline[linewidth=0.06](0.8,1.94)(3.0,2.74)(4.6,2.54)(5.4,1.34)(5.6,0.14)(4.8,-1.66)(2.6,-2.66)(0.2,-1.86)(0.0,-0.06)(0.8,1.94)
\psline[linewidth=0.08cm,arrowsize=0.05291667cm 2.0,arrowlength=1.4,arrowinset=0.0]{->}(6.12,0.12)(7.34,0.12)
\usefont{T1}{ptm}{m}{n}
\rput(6.64,0.49){T}
\psline[linewidth=0.02cm,linestyle=dashed,dash=0.16cm 0.16cm](13.31,1.27)(12.73,-1.79)
\psline[linewidth=0.02cm,linestyle=dashed,dash=0.16cm 0.16cm](12.51,2.43)(13.49,0.09)
\psdots[dotsize=0.12](13.22,0.74)
\pspolygon[linewidth=0.06](8.33,-0.55)(8.47,0.39)(10.35,2.13)(11.53,2.31)(12.89,1.51)(13.21,0.77)(12.97,-0.45)(11.53,-1.79)(9.67,-1.87)
\end{pspicture} 
}
\caption{The pentagram map on a closed polygon}
\label{Fig:pentagrammap}
\end{figure}

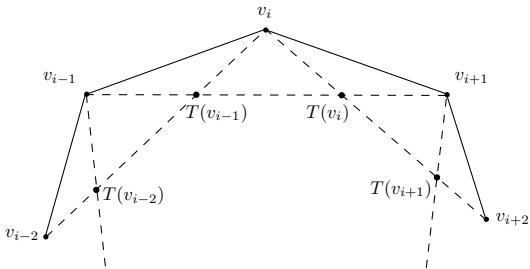
\begin{figure}
\scalebox{0.7} 
{
\begin{pspicture}(0,-2.5142188)(10.822812,2.5442188)
\psline[linewidth=0.02](1.1295311,-1.9153187)(1.8913449,0.78758126)(5.300462,2.0257812)(8.747668,0.80268127)(9.509483,-1.6133188)
\psline[linewidth=0.02,linestyle=dashed,dash=0.16cm 0.16cm](1.1485765,-1.9153187)(5.319507,2.0257812)(9.509483,-1.5982188)
\psline[linewidth=0.02cm,fillcolor=black,linestyle=dashed,dash=0.16cm 0.16cm](8.747668,0.77248126)(1.8913449,0.78758126)
\psdots[dotsize=0.1](5.3195314,2.0157812)
\psdots[dotsize=0.1](8.759531,0.78578126)
\psdots[dotsize=0.1](1.9195313,0.79578125)
\psdots[dotsize=0.1](9.509531,-1.5842187)
\psdots[dotsize=0.1](1.1395313,-1.9142188)
\psdots[dotsize=0.12](8.569531,-0.7842187)
\psdots[dotsize=0.12](3.9995313,0.78578126)
\psline[linewidth=0.02cm,fillcolor=black,linestyle=dashed,dash=0.16cm 0.16cm](1.9103903,0.78758126)(2.272252,-2.4740188)
\psline[linewidth=0.02cm,fillcolor=black,linestyle=dashed,dash=0.16cm 0.16cm](8.747668,0.71208125)(8.366762,-2.5042188)
\psdots[dotsize=0.12](2.0995312,-1.0242188)
\psdots[dotsize=0.12](6.7595315,0.7757813)
\usefont{T1}{ptm}{m}{n}
\rput(5.302344,2.3557813){$v_i$}
\usefont{T1}{ptm}{m}{n}
\rput(9.232344,1.0957812){$v_{i+1}$}
\usefont{T1}{ptm}{m}{n}
\rput(10.012343,-1.5842187){$v_{i+2}$}
\usefont{T1}{ptm}{m}{n}
\rput(0.6923438,-1.8842187){$v_{i-2}$}
\usefont{T1}{ptm}{m}{n}
\rput(1.4123436,1.0957812){$v_{i-1}$}
\usefont{T1}{ptm}{m}{n}
\rput(6.5023437,0.43578127){$T(v_i)$}
\usefont{T1}{ptm}{m}{n}
\rput(7.8923435,-1.0042187){$T(v_{i+1})$}
\usefont{T1}{ptm}{m}{n}
\rput(4.3923435,0.43578127){$T(v_{i-1})$}
\usefont{T1}{ptm}{m}{n}
\rput(2.8123438,-1.1242187){$T(v_{i-2})$}
\end{pspicture} 
}
\caption{The image of $v_i$ under the pentagram map}
\label{Fig:localpenta}
\end{figure}

More recently, Ovsienko, Schwartz and Tabachnikov proved
integrability of the pentagram map for the space of twisted polygons
\cite{OST} with the help of a pentagram-invariant Poisson structure,
and for the space of closed polygons \cite{OST2}.

There also is a sequence of generalizations of the pentagram map in $\R\P^2$ called ``higher pentagram maps" \cite{GSTV}. This generalization was inspired by the cluster algebra structure of the Schwartz pentagram map, first described by Glick \cite{Glick}. The higher pentagram maps correspond to using intersections of  longer diagonals of the $n$-gon to map $\mathcal P_n$ to itself. 
For a given integer $\k\geq 3,$ the generalized pentagram map $T_\k$ constructs a new polygon using $(\k-1)^{th}-$diagonals (connecting vertex $i$ to vertex $i+\k-1$) instead of the shortest diagonals. Using the Poisson structure compatible with the cluster algebra structure the authors were also able to show the integrability of the higher pentagram maps \cite{GSTV}.

%

Let us introduce two integers $r=\lfloor\frac{\k-2}2\rfloor$ and $r'=\lceil \frac{\k-2}2 \rceil$,
such that 
\begin{eqnarray*}
r+r'+2=\k, \qquad r'= \left\{ \begin{array}{ll}
r & \hbox{$\k$ even}
\\r+1 & \hbox{$\k$ odd.}
\end{array}\right.
\end{eqnarray*}
The higher pentagram map can be expressed in terms the projective invariants $p,q$ 
of the twisted polygon $A$, defined as follows:
\begin{eqnarray*}
p_i(A) &=& -\chi(v_{i-r'},\overline{v_{i-r'-1}v_{i+r}}\cap L,L\cap\overline{v_{i-r'+1}v_{i+r+2}},v_{i+r+1})^{-1}, \nonumber \\
q_i(A) &=& -\chi(\overline{v_{i-\k+1}v_{i-\k+2}}\cap\overline{v_{i}v_{i+1}},v_{i},v_{i+1},\overline{v_{i}v_{i+1}}\cap\overline{v_{i+\k-1}v_{i+\k}})\, .
\end{eqnarray*}
Here, 
$$\chi(a,b,c,d)=\frac{(a-b)(c-d)}{(a-c)(b-d)}.
$$
and $L=\overline{v_{i-r'}v_{i+r+1}}.$
Note that $p_i$ and $q_i$ are periodic with period $n$.
The higher pentagram transformation $A\to T_\k(A)$ reads\footnote{Eq. \ref{highpentaT} is in fact the inverse of the transformation defined in \cite{GSTV}, up to the interchange of the $p$ and $q$ invariants. The latter reads $(p_i,q_i)\mapsto (p_i^*,q_i^*)$, with:
\begin{equation}
q_i^*=p_{i+r-r'}^{-1}\qquad p_i^*=q_i \frac{(1+p_{i-r'-1})(1+p_{i+r+1})}{(1+p_{i-r'}^{-1})(1+p_{i+r}^{-1})}
\end{equation}}:
\begin{equation}\label{highpentaT}
q_i(T_\k(A))=p_{i+r'-r}(A)^{-1}\qquad p_i(T_\k(A))=q_i(A) \frac{(1+p_{i-r}(A))(1+p_{i+r'}(A))}{(1+p_{i-r-1}(A)^{-1})(1+p_{i+r'+1}(A)^{-1})}
\end{equation}

This map was shown by \cite{OST,GSTV,Sol} to be integrable. In particular, there are two useful conserved quantities:
\begin{equation}\label{consT} O_n(A)=\prod_{i=1}^n p_i(A) \qquad {\rm and}\qquad  E_n(A)=\prod_{i=1}^n q_i(A) 
\end{equation}
namely such that $O_n(T_\k(A))=O_n(A)$ and $E_n(T_\k(A))=E_n(A)$.

\subsection{The Schwartz pentagram map:  cluster algebra structure}

Note that the transformation \eqref{highpentaT} reduces to the ordinary \cite{Schwartz} pentagram map $T\equiv T_3$ of Fig. \ref{Fig:localpenta} when $\k=3$, i.e. $r=0$ and $r'=1$.

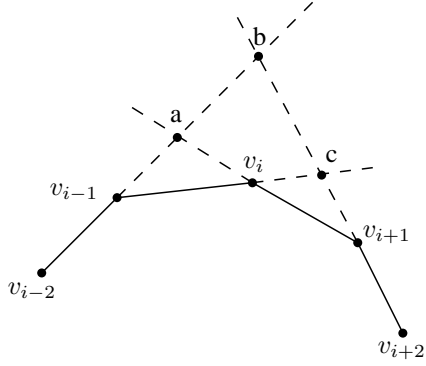
\begin{figure}
\scalebox{1} 
{
\begin{pspicture}(0,-2.4489062)(6.4228125,2.4189062)
\psdots[dotsize=0.12](0.8195312,-1.1910938)
\psdots[dotsize=0.12](1.8195312,-0.19109374)
\psdots[dotsize=0.12](3.6195314,0.00890625)
\psdots[dotsize=0.12](5.0195312,-0.79109377)
\psdots[dotsize=0.12](5.619531,-1.9910938)
\psline[linewidth=0.02cm](0.8195312,-1.1910938)(1.8195312,-0.19109374)
\psline[linewidth=0.02cm](1.8195312,-0.19109374)(3.6195314,0.00890625)
\psline[linewidth=0.02cm](3.6195314,0.00890625)(5.0195312,-0.79109377)
\psline[linewidth=0.02cm](5.0195312,-0.79109377)(5.619531,-1.9910938)
\psline[linewidth=0.02cm,linestyle=dashed,dash=0.16cm 0.16cm](1.8195312,-0.19109374)(4.4195313,2.4089062)
\psline[linewidth=0.02cm,linestyle=dashed,dash=0.16cm 0.16cm](5.0195312,-0.79109377)(3.4195313,2.2089062)
\psline[linewidth=0.02cm,linestyle=dashed,dash=0.16cm 0.16cm](3.6195314,0.00890625)(5.219531,0.20890625)
\psline[linewidth=0.02cm,linestyle=dashed,dash=0.16cm 0.16cm](3.6195314,0.00890625)(2.0195312,1.0089062)
\psdots[dotsize=0.12](2.6195314,0.60890627)
\psdots[dotsize=0.12](4.539531,0.10890625)
\psdots[dotsize=0.12](3.6995313,1.6889062)
\usefont{T1}{ptm}{m}{n}
\rput(0.6923438,-1.4410938){$v_{i-2}$}
\usefont{T1}{ptm}{m}{n}
\rput(1.2523438,-0.12109375){$v_{i-1}$}
\usefont{T1}{ptm}{m}{n}
\rput(3.6423438,0.25890625){$v_{i}$}
\usefont{T1}{ptm}{m}{n}
\rput(5.412344,-0.68109375){$v_{i+1}$}
\usefont{T1}{ptm}{m}{n}
\rput(5.612344,-2.2210937){$v_{i+2}$}
\usefont{T1}{ptm}{m}{n}
\rput(2.6057813,0.89890623){a}
\usefont{T1}{ptm}{m}{n}
\rput(3.7179687,1.9589063){b}
\usefont{T1}{ptm}{m}{n}
\rput(4.6754684,0.35890624){c}
\end{pspicture} 
}
\caption{Corner invariants: $X_i=\chi(v_{i-2},v_{i-1},a,b),~Y_i=\chi(b,c,v_{i+1},v_{i+2})$}
\label{Fig:corner}
\end{figure}

In his original construction, Schwartz introduced for any 
twisted $n$-gon $A=(v_{i})_{i\in\mathbb{Z}}$ a set of coordinates
$(X_{i},Y_{i})_{i=1}^{n}$ called \emph{corner invariants}, and defined as:
\begin{eqnarray*}
X_{i}(A) & =&\chi(v_{i-2},v_{i-1},\overline{v_{i-2}v_{i-1}}\cap\overline{v_{i}v_{i+1}},\overline{v_{i-2}v_{i-1}}\cap\overline{v_{i+1}v_{i+2}}),\\
Y_{i}(A) & =&\chi(\overline{v_{i-2}v_{i-1}}\cap\overline{v_{i+1}v_{i+2}},\overline{v_{i-1}v_{i}}\cap\overline{v_{i+1}v_{i+2}},v_{i+1},v_{i+2}).
\end{eqnarray*}
These are illustrated in Figure \ref{Fig:corner}. The corner invariants are related to the $p,q$ invariants via:
\begin{equation}
p_i(A) = -\left(X_i(A) Y_i(A)\right)^{-1},\quad q_i(A) = -Y_i(A) X_{i+1}(A)\, .
\end{equation}

Now we can formally relate the pentagram map evolution of $y$-coordinates and the cluster algebra mutation. For a twisted $n-$gon $A$ parameterized by $y$-parameters $(p_i,q_i)_{i=1}^n,$ we define a labeled $Y$-seed corresponding to $A$ to be $(\textbf{y},B)$ where 
\begin{equation}\label{eq: pentasmutn}
\textbf{y}=\left(p_1(A),\dots,p_n(A),q_1(A),\dots,q_n(A)\right)\text,\quad B=\begin{pmatrix}\textbf{0} & C \\ -C^\top & \textbf{0}\end{pmatrix}
\end{equation}
where $C=(c_{ij})$ is an ${n\times n}$ matrix defined by $c_{ij}=\delta_{i,j-1}-\delta_{i,j}-\delta_{i,j+1}+\delta_{i,j+2}$ where the indices are read modulo $n.$ Equivalently, the quiver corresponding to the exchange matrix $B$ is a bipartite graph with $2n$ vertices labeled by $p_{1},\dots,p_{n},q_{1},\dots,q_{n}.$ There are four arrows adjacent to each $q_{i}$: two outgoing arrows from $q_{i}$ to $p_{i}$ and $p_{i+1},$ two incoming arrows from $p_{i-1}$ and $p_{i+2}$ to $q_{i}.$ See an example in Fig.~\ref{Fig: GlickQuiver}. 

The pentagram map is then a composition of a sequence of mutations
on all the $p_{i}-$vertices followed by a relabeling $\{p_{i}\mapsto q_{i+1},q_{i}\mapsto p_{i}\}$\cite{Glick}. This maps the coefficients($y-$parameters) $(p_i(A),q_i(A))$ to $\left(p_i(T(A)),q_i(T(A))\right),$ and the quiver $B$ to ifself.  See an example in Fig.~\ref{Fig: MutationGlick}. 

\begin{figure}
\scalebox{0.8} 
{
\begin{pspicture}(0,-4.55)(9.482813,4.55)
\usefont{T1}{ptm}{m}{n}
\rput(5.612344,2.24){$q_1$}
\rput{-292.57877}(5.1565895,-5.754008){\pscircle[linewidth=0.04,dimen=outer](6.890449,0.98743486){0.2995644}}
\usefont{T1}{ptm}{m}{n}
\rput(6.8923435,1.0){$q_2$}
\rput{-292.57877}(0.21905442,-4.859106){\pscircle[linewidth=0.04,dimen=outer](3.7510266,-2.26539){0.2995644}}
\usefont{T1}{ptm}{m}{n}
\rput(3.752344,-2.24){$q_5$}
\rput{-292.57877}(0.56061625,-2.8179455){\pscircle[linewidth=0.04,dimen=outer](2.3921258,-0.9888371){0.2995644}}
\usefont{T1}{ptm}{m}{n}
\rput(2.4123437,-0.98){$q_6$}
\rput{-292.57877}(2.310045,-1.6380448){\pscircle[linewidth=0.04,dimen=outer](2.382602,0.91216594){0.2995644}}
\usefont{T1}{ptm}{m}{n}
\rput(2.3923438,0.92){$q_7$}
\rput{-292.57877}(4.348336,-2.0571082){\pscircle[linewidth=0.04,dimen=outer](3.715801,2.2301652){0.2995644}}
\usefont{T1}{ptm}{m}{n}
\rput(3.732344,2.24){$q_8$}
\pscircle[linewidth=0.04,dimen=outer](4.659531,4.25){0.3}
\usefont{T1}{ptm}{m}{n}
\rput(4.6723437,4.26){$p_1$}
\pscircle[linewidth=0.04,dimen=outer](1.6195312,3.01){0.3}
\usefont{T1}{ptm}{m}{n}
\rput(1.6323436,3.02){$p_8$}
\pscircle[linewidth=0.04,dimen=outer](8.999532,0.03){0.3}
\usefont{T1}{ptm}{m}{n}
\rput(8.992344,0.04){$p_3$}
\pscircle[linewidth=0.04,dimen=outer](7.679531,-3.05){0.3}
\usefont{T1}{ptm}{m}{n}
\rput(7.6723437,-3.04){$p_4$}
\pscircle[linewidth=0.04,dimen=outer](4.679531,-4.25){0.3}
\usefont{T1}{ptm}{m}{n}
\rput(4.6923437,-4.24){$p_5$}
\pscircle[linewidth=0.04,dimen=outer](1.6195312,-3.03){0.3}
\usefont{T1}{ptm}{m}{n}
\rput(1.6323436,-3.0){$p_6$}
\pscircle[linewidth=0.04,dimen=outer](0.39953125,-0.01){0.3}
\usefont{T1}{ptm}{m}{n}
\rput(0.41234374,0.0){$p_7$}
\pscircle[linewidth=0.04,dimen=outer](7.679531,3.03){0.3}
\usefont{T1}{ptm}{m}{n}
\rput(7.6923437,3.04){$p_2$}
\psline[linewidth=0.04cm,arrowsize=0.05291667cm 2.0,arrowlength=1.4,arrowinset=0.0]{->}(3.88,2.49)(4.639531,3.97)
\psline[linewidth=0.04cm,arrowsize=0.05291667cm 2.0,arrowlength=1.4,arrowinset=0.0]{->}(5.48,2.45)(4.7195315,3.97)
\psline[linewidth=0.04cm,arrowsize=0.05291667cm 2.0,arrowlength=1.4,arrowinset=0.0]{->}(5.88,2.31)(7.4795313,2.85)
\psline[linewidth=0.04cm,arrowsize=0.05291667cm 2.0,arrowlength=1.4,arrowinset=0.0]{->}(6.98,1.27)(7.5195312,2.81)
\psline[linewidth=0.04cm,arrowsize=0.05291667cm 2.0,arrowlength=1.4,arrowinset=0.0]{->}(7.14,0.85)(8.719531,0.09)
\psline[linewidth=0.04cm,arrowsize=0.05291667cm 2.0,arrowlength=1.4,arrowinset=0.0]{->}(7.14,-0.79)(8.7395315,0.01)
\psline[linewidth=0.04cm,arrowsize=0.05291667cm 2.0,arrowlength=1.4,arrowinset=0.0]{->}(6.98,-1.21)(7.5195312,-2.83)
\psline[linewidth=0.04cm,arrowsize=0.05291667cm 2.0,arrowlength=1.4,arrowinset=0.0]{->}(5.82,-2.37)(7.4595313,-2.91)
\psline[linewidth=0.04cm,arrowsize=0.05291667cm 2.0,arrowlength=1.4,arrowinset=0.0]{->}(5.44,-2.55)(4.7195315,-3.97)
\psline[linewidth=0.04cm,arrowsize=0.05291667cm 2.0,arrowlength=1.4,arrowinset=0.0]{->}(3.88,-2.51)(4.639531,-3.97)
\psline[linewidth=0.04cm,arrowsize=0.05291667cm 2.0,arrowlength=1.4,arrowinset=0.0]{->}(3.46,-2.33)(1.8795314,-2.87)
\psline[linewidth=0.04cm,arrowsize=0.05291667cm 2.0,arrowlength=1.4,arrowinset=0.0]{->}(2.3,-1.25)(1.7995312,-2.83)
\psline[linewidth=0.04cm,arrowsize=0.05291667cm 2.0,arrowlength=1.4,arrowinset=0.0]{->}(2.14,-0.85)(0.65953124,-0.07)
\psline[linewidth=0.04cm,arrowsize=0.05291667cm 2.0,arrowlength=1.4,arrowinset=0.0]{->}(2.14,0.75)(0.6795312,0.01)
\psline[linewidth=0.04cm,arrowsize=0.05291667cm 2.0,arrowlength=1.4,arrowinset=0.0]{->}(2.3,1.19)(1.7995312,2.79)
\psline[linewidth=0.04cm,arrowsize=0.05291667cm 2.0,arrowlength=1.4,arrowinset=0.0]{->}(3.46,2.31)(1.8595313,2.83)
\psline[linewidth=0.04cm,arrowsize=0.05291667cm 2.0,arrowlength=1.4,arrowinset=0.0]{->}(4.539531,4.01)(2.5,1.17)
\psline[linewidth=0.04cm,arrowsize=0.05291667cm 2.0,arrowlength=1.4,arrowinset=0.0]{->}(4.8195314,4.01)(6.74,1.19)
\psline[linewidth=0.04cm,arrowsize=0.05291667cm 2.0,arrowlength=1.4,arrowinset=0.0]{->}(7.3995314,2.93)(3.98,2.31)
\psline[linewidth=0.04cm,arrowsize=0.05291667cm 2.0,arrowlength=1.4,arrowinset=0.0]{->}(7.599531,2.77)(6.94,-0.67)
\psline[linewidth=0.04cm,arrowsize=0.05291667cm 2.0,arrowlength=1.4,arrowinset=0.0]{->}(8.7795315,0.17)(5.84,2.11)
\psline[linewidth=0.04cm,arrowsize=0.05291667cm 2.0,arrowlength=1.4,arrowinset=0.0]{->}(8.7395315,-0.09)(5.78,-2.19)
\psline[linewidth=0.04cm,arrowsize=0.05291667cm 2.0,arrowlength=1.4,arrowinset=0.0]{->}(7.579531,-2.77)(6.96,0.71)
\psline[linewidth=0.04cm,arrowsize=0.05291667cm 2.0,arrowlength=1.4,arrowinset=0.0]{->}(4.7995315,-3.99)(6.78,-1.19)
\psline[linewidth=0.04cm,arrowsize=0.05291667cm 2.0,arrowlength=1.4,arrowinset=0.0]{->}(4.539531,-3.99)(2.58,-1.23)
\psline[linewidth=0.04cm,arrowsize=0.05291667cm 2.0,arrowlength=1.4,arrowinset=0.0]{->}(1.8795314,-2.95)(5.28,-2.35)
\psline[linewidth=0.04cm,arrowsize=0.05291667cm 2.0,arrowlength=1.4,arrowinset=0.0]{->}(1.6995312,-2.77)(2.36,0.65)
\psline[linewidth=0.04cm,arrowsize=0.05291667cm 2.0,arrowlength=1.4,arrowinset=0.0]{->}(0.65953124,-0.15)(3.46,-2.19)
\psline[linewidth=0.04cm,arrowsize=0.05291667cm 2.0,arrowlength=1.4,arrowinset=0.0]{->}(0.6795312,0.13)(3.46,2.09)
\psline[linewidth=0.04cm,arrowsize=0.05291667cm 2.0,arrowlength=1.4,arrowinset=0.0]{->}(1.7195313,2.75)(2.3,-0.75)
\psline[linewidth=0.04cm,arrowsize=0.05291667cm 2.0,arrowlength=1.4,arrowinset=0.0]{->}(1.8595313,2.93)(5.32,2.29)
\rput{-292.57877}(3.3739865,-6.9574337){\pscircle[linewidth=0.04,dimen=outer](6.901016,-0.95019215){0.2995644}}
\usefont{T1}{ptm}{m}{n}
\rput(6.9123435,-0.94){$q_3$}
\rput{-292.57877}(5.5063314,-3.7953463){\pscircle[linewidth=0.04,dimen=outer](5.597465,2.2288682){0.2995644}}
\rput{-292.57877}(1.3143553,-6.518886){\pscircle[linewidth=0.04,dimen=outer](5.5425453,-2.2744422){0.2995644}}
\usefont{T1}{ptm}{m}{n}
\rput(5.5523434,-2.26){$q_4$}
\psline[linewidth=0.04cm,arrowsize=0.05291667cm 2.0,arrowlength=1.4,arrowinset=0.0]{->}(7.3995314,-2.97)(4.0,-2.35)
\end{pspicture} 
}
\caption{Glick's quiver for $n=8$}
\label{Fig: GlickQuiver}
\end{figure}
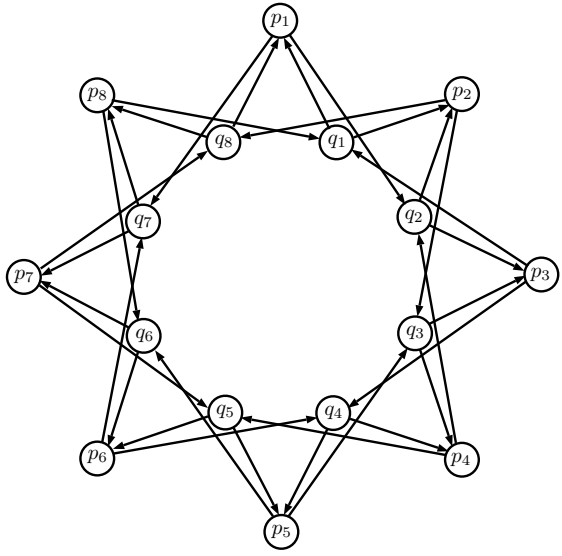

\begin{figure}
\input{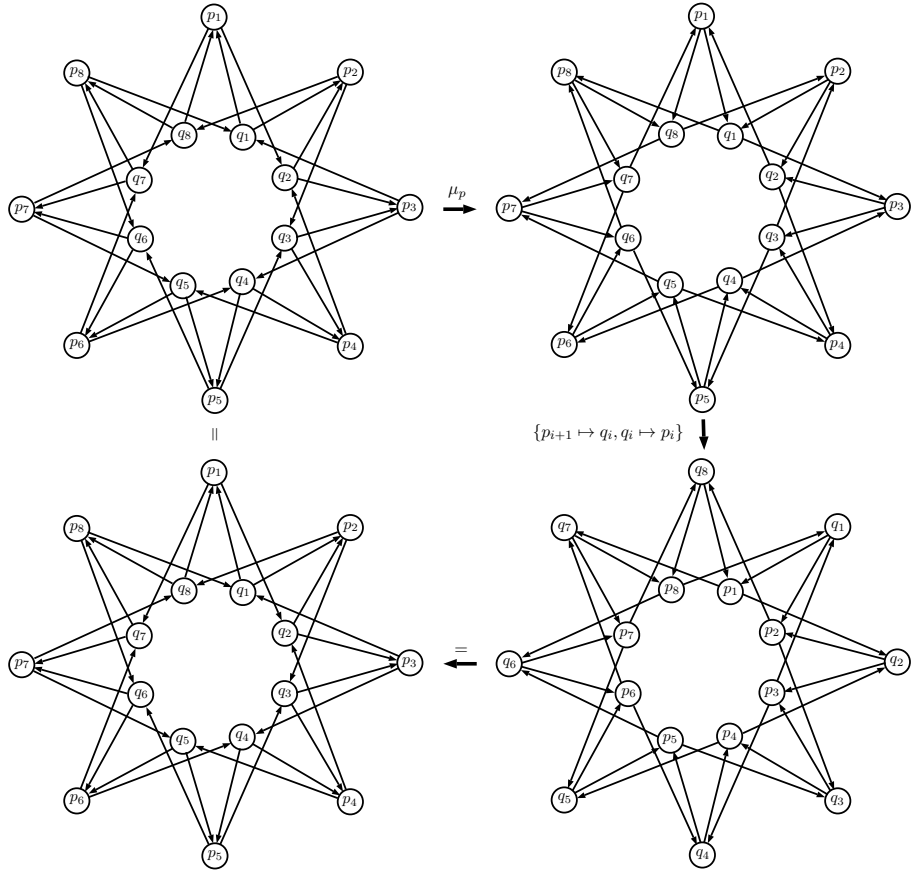}
\caption{The Glick's quiver ($n=8$) after a sequence of mutations on all $p-vertices$ and a relabelling $\{p_{i+1}\mapsto q_i, q_i\mapsto p_i\}$.}
\label{Fig: MutationGlick}
\end{figure}

\subsection{Higher pentagram maps: cluster algebra structure}

Similarly to the case $\k=3$, the evolution \eqref{highpentaT} can be identified with a cluster algebra mutation. 
For a twisted $n-$gon $A$ having $(\mathbf{p,q})-$parameters $(p_i,q_i)_{i=1}^n,$ we define a labeled 
$Y-$seed corresponding to $A$ to be $(\textbf{y},B)$ as in \eqref{eq: pentasmutn} except that the matrix 
$C=(c_{ij})$ is defined to be $c_{ij}=\delta_{i,j-r-1}-\delta_{i,j-r}-\delta_{i,j+r'}+\delta_{i,j+r'+1}$ with indices 
read modulo $n.$ In other word, the quiver corresponding to the exchange matrix $B$ is a bipartite graph 
with $2n$ vertices labeled by $p_{1},\dots,p_{n},q_{1},\dots,q_{n}.$ There are four arrows adjacent to each 
$q_{i}$: two outgoing arrows from $q_{i}$ to $p_{i-r}$ and $p_{i+r'},$ two incoming arrows from $p_{i-r-1}$ 
and $p_{i+r'+1}$ to $q_{i}.$ This is called a \emph{generalized Glick's quiver} $\mathcal{Q}_{k,n}$, see 
Figure~\ref{fig: genglickquiver}. The higher pentagram map $T_\k$ is then a composition of a sequence of mutations
on all the $p_{i}-$vertices and a relabeling $\{p_{i}\mapsto q_{i+r'-r},q_{i}\mapsto p_{i}\}$\cite{GSTV}.

\begin{figure}
\scalebox{1} 
{
\begin{pspicture}(0,-1.3781251)(5.6828127,1.3781251)
\pscircle[linewidth=0.04,dimen=outer](2.567656,0.72874993){0.20906238}
\usefont{T1}{ptm}{m}{n}
\rput(2.6223435,1.1896875){$q_i$}
\psline[linewidth=0.04cm,arrowsize=0.05291667cm 2.0,arrowlength=1.4,arrowinset=0.0]{->}(1.0385938,-0.54031247)(2.4385939,0.5996876)
\usefont{T1}{ptm}{m}{n}
\rput(3.5423436,-1.1503125){$p_{i+r'}$}
\usefont{T1}{ptm}{m}{n}
\rput(4.6723437,-1.1503125){$p_{i+r'+1}$}
\usefont{T1}{ptm}{m}{n}
\rput(0.8123438,-1.1503125){$p_{i-r-1}$}
\usefont{T1}{ptm}{m}{n}
\rput(1.9623437,-1.1503125){$p_{i-r}$}
\psline[linewidth=0.04cm,arrowsize=0.05291667cm 2.0,arrowlength=1.4,arrowinset=0.0]{->}(2.4985938,0.55968755)(1.9785938,-0.50031245)
\psline[linewidth=0.04cm,arrowsize=0.05291667cm 2.0,arrowlength=1.4,arrowinset=0.0]{->}(2.6585937,0.55968755)(3.1985939,-0.50031245)
\pscircle[linewidth=0.04,dimen=outer](0.9076561,-0.69125){0.20906238}
\pscircle[linewidth=0.04,dimen=outer](1.8876561,-0.69125){0.20906238}
\pscircle[linewidth=0.04,dimen=outer](3.267656,-0.69125){0.20906238}
\pscircle[linewidth=0.04,dimen=outer](4.2276564,-0.69125){0.20906238}
\psline[linewidth=0.04cm,arrowsize=0.05291667cm 2.0,arrowlength=1.4,arrowinset=0.0]{->}(4.0785937,-0.54031247)(2.7385938,0.5996876)
\end{pspicture} 
}
\caption{The quiver $\mathcal{Q}_{\k,n}$ at $q_i$}
\label{fig: genglickquiver}
\end{figure}
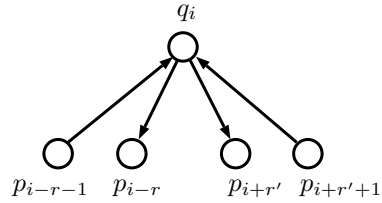

\subsection{Higher pentagram maps and $Y-$systems} 
We now use the wrapping of the generalized Glick's quiver $\mathcal{Q}_{\k,n}$ around a torus, interpreted in \cite{DFKT} 
as the ``octahedron" quiver with vertices in $\Z^2$
of Fig. \ref{octaquiver}, and with suitable identification of vertices, along the two periods of the torus:
$(i,j)\equiv (i+\k,j+2-\k)$ and $(i,j)\equiv (i+n,j-n)$. We show a sample of such a wrapping in Fig.~\ref{Fig: GlickTorus}
for $\k=3$ and $n=5$.

\begin{figure}
\input{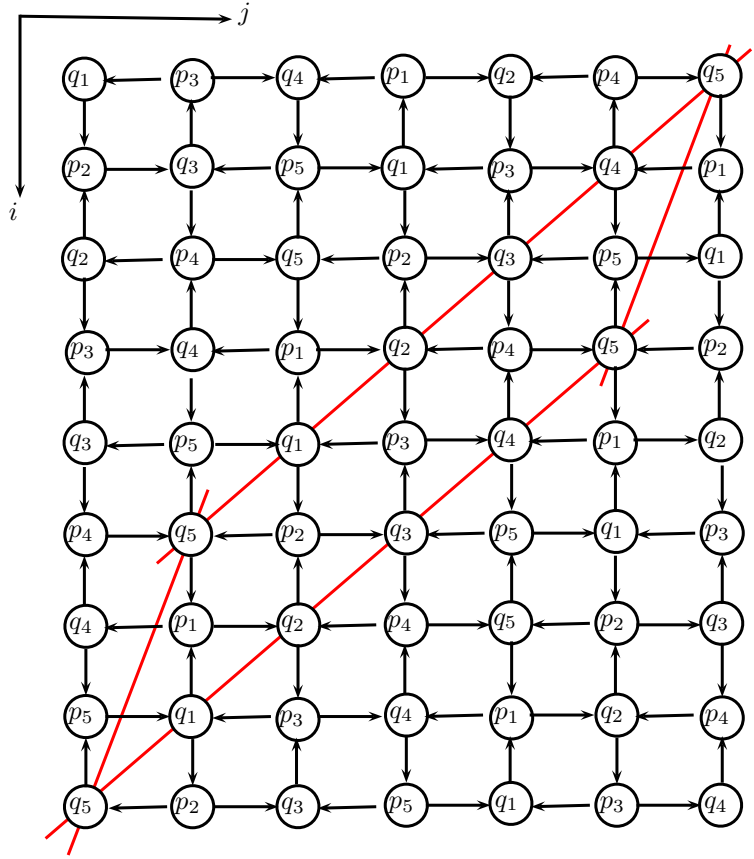}
\caption{Glick's quiver for the pentagram map, viewed as a torus-wrapped octahedron quiver for $\k=3$ and $n=5$. 
We have represented a fundamental domain by solid lines, along the two periods $(3,-1)$ and $(5,-5)$.}
\label{Fig: GlickTorus}
\end{figure}


Having established this connection, upon carefully comparing the transformation \eqref{highpentaT} with the $Y$-system evolution
\eqref{Ysys} ,
it is now easy to interpret the $p,q$ coordinates of the higher pentagram map in terms of $Y$-system solutions.
We arrive at the following:

\begin{prop}\label{Prop: YsystemHigherPent}
Let $A$ be a twisted $n-$gon in $\mathbb{RP}^{\k-1}$ with $p,q$ coordinates $(p_i(A),q_i(A))_{i\in [1,n]}$. 
Let $\{Y_{i,j,k}:i,j,k\in\mathbb{Z},i+j+k\equiv0\mod2\}$ be the $Y-$system solution subject to the initial conditions:
\begin{eqnarray}
Y_{i,j,-1}  &=&\left(q_{\left(((\k-2) i+\k j+r-r'\right)/2}(A)\right)^{-1}\qquad (i,j\in \Z;i+j=1\, {\rm mod}\, 2)\nonumber \\
Y_{i,j,0}  &=& p_{((\k-2) i+\k j)/2}(A)\qquad \qquad \qquad (i,j\in \Z;i+j=0\, {\rm mod}\, 2), \label{initinv}
\end{eqnarray}
Then, the $p,q$ coordinates of the $k$-th iterate $T_\k^k$ of the higher pentagram map on the polygon $A$ read:
\[
p_{((\k-2) i+\k j+k(r-r'))/2}\Big(T_\k^{k}(A)\Big)=Y_{i,j,k},\  q_{((\k-2) i+\k j+(k+1)(r-r'))/2}\Big(T_\k^{k}(A)\Big)=Y_{i,j,k-1}^{-1}
\]
for any $i,j,k\in\mathbb{Z}$ with respectively $i+j+k\equiv0\mod2$ and $i+j+k-1\equiv0\mod2$. 
\end{prop}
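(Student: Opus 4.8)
The plan is to prove the two displayed identities simultaneously by induction on $k$, writing $A^{(k)} := T_\k^{k}(A)$ and abbreviating the linear index function
$$\phi_k(i,j) := \frac{(\k-2)\,i + \k\, j + k\,(r-r')}{2},$$
so that the assertions read $Y_{i,j,k} = p_{\phi_k(i,j)}(A^{(k)})$ (when $i+j+k$ is even) and $Y_{i,j,k-1}^{-1} = q_{\phi_{k+1}(i,j)}(A^{(k)})$ (when $i+j+k-1$ is even). The base case $k=0$ is exactly the initial data \eqref{initinv}: its first line gives $q_{\phi_1(i,j)}(A) = Y_{i,j,-1}^{-1}$ and its second gives $p_{\phi_0(i,j)}(A) = Y_{i,j,0}$. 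Along the way one records that the parity constraints force $\phi_k(i,j)$ to be an integer precisely where it is invoked, using $\k-2 = r+r'$ and $r-r'\in\{0,-1\}$.

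For the inductive step I would first dispose of the $q$-identity, which is immediate. Applying the first half of \eqref{highpentaT} to $A^{(k+1)} = T_\k(A^{(k)})$ gives $q_\ell(A^{(k+1)}) = p_{\ell+r'-r}(A^{(k)})^{-1}$. Taking $\ell = \phi_{k+2}(i,j)$ and using $\phi_{k+2}(i,j) - \phi_k(i,j) = r-r'$ (so that $\ell + r' - r = \phi_k(i,j)$), the inductive hypothesis for the $p$-identity at level $k$ yields $q_{\phi_{k+2}(i,j)}(A^{(k+1)}) = p_{\phi_k(i,j)}(A^{(k)})^{-1} = Y_{i,j,k}^{-1}$, which is precisely the $q$-identity at level $k+1$, with the parity $i+j+k$ even matching on both sides.

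The substance is the $p$-identity. Fix $i,j$ with $i+j+k+1$ even and set $\ell = \phi_{k+1}(i,j)$. The second half of \eqref{highpentaT} expresses $p_\ell(A^{(k+1)})$ as $q_\ell(A^{(k)})$ times a ratio built from $p_{\ell-r}, p_{\ell+r'}, p_{\ell-r-1}, p_{\ell+r'+1}$ evaluated at $A^{(k)}$. The key computation is to identify each argument with a $Y$-neighbor at level $k$ via the inductive hypothesis. Using $\k-2=r+r'$ one checks
\begin{gather*}
\ell - r = \phi_k(i-1,j),\quad \ell + r' = \phi_k(i+1,j),\\
\ell - r - 1 = \phi_k(i,j-1),\quad \ell + r' + 1 = \phi_k(i,j+1),
\end{gather*}
so that the inductive hypothesis gives $q_\ell(A^{(k)}) = Y_{i,j,k-1}^{-1}$, $p_{\ell-r}(A^{(k)}) = Y_{i-1,j,k}$, $p_{\ell+r'}(A^{(k)}) = Y_{i+1,j,k}$, $p_{\ell-r-1}(A^{(k)}) = Y_{i,j-1,k}$ and $p_{\ell+r'+1}(A^{(k)}) = Y_{i,j+1,k}$ (each parity condition reducing to $i+j+k$ odd). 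Substituting into \eqref{highpentaT} turns the pentagram evolution into
$$p_\ell(A^{(k+1)}) = Y_{i,j,k-1}^{-1}\,\frac{(1+Y_{i-1,j,k})(1+Y_{i+1,j,k})}{(1+Y_{i,j-1,k}^{-1})(1+Y_{i,j+1,k}^{-1})},$$
which is exactly the $Y$-system recursion \eqref{Ysys} solved for $Y_{i,j,k+1}$. Hence $p_{\phi_{k+1}(i,j)}(A^{(k+1)}) = Y_{i,j,k+1}$, completing the induction.

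The only genuine obstacle is the index bookkeeping just sketched: one must verify that the four diagonal shifts $-r,\,+r',\,-r-1,\,+r'+1$ of the higher pentagram map correspond under $\phi_k$ to the four horizontal neighbors $(i\mp1,j)$ and $(i,j\mp1)$ of the octahedron stencil, and that the single relabeling shift $r'-r$ accounts for the drop from level $k+1$ to $k$ in the $q$-identity. All of this reduces to the identities $r+r'=\k-2$ and $\phi_{k+2}-\phi_k = r-r'$ together with the parity constraints that keep every index integral and select the correct sublattice. No analytic input is needed: once the indices are aligned, \eqref{highpentaT} and \eqref{Ysys} are literally the same relation.
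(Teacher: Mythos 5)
Your proof is correct and is essentially the paper's own argument made explicit: the paper gives no written proof beyond the remark that the claim follows ``upon carefully comparing'' \eqref{highpentaT} with \eqref{Ysys}, and your induction—with the index identifications $\ell-r=\phi_k(i-1,j)$, $\ell+r'=\phi_k(i+1,j)$, $\ell-r-1=\phi_k(i,j-1)$, $\ell+r'+1=\phi_k(i,j+1)$ following from $r+r'=\k-2$ and $\k=r+r'+2$, together with the parity/integrality bookkeeping—is precisely that comparison carried out rigorously. The only loose end is that the Proposition asserts the identities for all $k\in\Z$ while your induction runs forward from $k=0$; for $k<0$ one runs the identical computation with \eqref{highpentaT} inverted (it is invertible, cf.\ the paper's footnote) and \eqref{Ysys} solved for $Y_{i,j,k-1}$, which is the same bookkeeping.
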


Recalling that the coordinates $p_i(A)$ and $q_i(A)$ are periodic with period $n$, so are $p_i\Big(T_\k^k(A)\Big)$
and $q_i\Big(T_\k^k(A)\Big)$. We deduce the following:

\begin{cor}\label{douper}The $Y-$system solution in Proposition \ref{Prop: YsystemHigherPent} has double periodicity
$(\k,2-\k,0)$ and $(n,-n,0),$ i.e.,
\begin{equation}\label{perdou}
Y_{i,j,k}=Y_{(i,j,k)+\alpha (\k,2-\k,0)+\beta (n,-n,0)}
\end{equation}
for $i+j+k\equiv0\mod2$ and $\alpha,\beta\in\mathbb{Z}.$
\end{cor}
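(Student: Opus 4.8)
The plan is to read off the double periodicity directly from the explicit identification in Proposition~\ref{Prop: YsystemHigherPent}, using only the $n$-periodicity of the coordinates $p_i,q_i$. Introduce the index function $M(i,j,k)=((\k-2)i+\k j+k(r-r'))/2$, so that on the sublattice $i+j+k\equiv 0\mod 2$ the proposition says $Y_{i,j,k}=p_{M(i,j,k)}\big(T_\k^k(A)\big)$. Since the set of periods is a group, it suffices to establish invariance of $Y$ under each of the two generating shifts $(\k,2-\k,0)$ and $(n,-n,0)$ separately; the general shift $\alpha(\k,2-\k,0)+\beta(n,-n,0)$ in \eqref{perdou} then follows by composing and iterating. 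I would therefore reduce the claim to two short index computations.

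First I would check that both generators preserve the third coordinate $k$ and change $i+j$ by $\k+(2-\k)=2$ and by $n+(-n)=0$ respectively; in either case $i+j+k$ changes by an even number, so the shifted point again lies on the sublattice $i+j+k\equiv 0\mod 2$ where the proposition applies. For the first shift a direct substitution gives $M(i+\k,j+2-\k,k)=M(i,j,k)+((\k-2)\k+\k(2-\k))/2=M(i,j,k)$, because $(\k-2)\k+\k(2-\k)=0$; hence $Y_{i+\k,j+2-\k,k}=p_{M(i,j,k)}\big(T_\k^k(A)\big)=Y_{i,j,k}$, and this first period is in fact pointwise on the level of the $p$-index, needing no periodicity input. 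For the second shift, $M(i+n,j-n,k)=M(i,j,k)+((\k-2)n-\k n)/2=M(i,j,k)-n$, so that $Y_{i+n,j-n,k}=p_{M(i,j,k)-n}\big(T_\k^k(A)\big)$; now the $n$-periodicity of the map $i\mapsto p_i\big(T_\k^k(A)\big)$, inherited from the fact that the coordinates of any twisted $n$-gon are $n$-periodic and applied to the twisted $n$-gon $T_\k^k(A)$, yields $p_{M-n}=p_M$ and hence $Y_{i+n,j-n,k}=Y_{i,j,k}$.

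There is essentially no hard step: the whole content reduces to the two linear identities $(\k-2)\k+\k(2-\k)=0$ and $(\k-2)n-\k n=-2n$, together with the observation that both shifts respect the parity constraint. The one point deserving a moment of care—the closest thing to an obstacle—is the bookkeeping of parity and integrality: one must verify that $M(i,j,k)$ is genuinely an integer on the sublattice (which in the odd-$\k$ case, where $r-r'=-1$, uses precisely $i+j+k\equiv 0\mod 2$) and that this integrality is stable under both shifts, so the subscript of $p$ is always well defined. Having settled the two generators, closure of the period lattice under addition then delivers \eqref{perdou} for arbitrary $\alpha,\beta\in\Z$, completing the argument.
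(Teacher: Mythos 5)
Your proof is correct and follows exactly the route the paper intends: the paper deduces the corollary from Proposition \ref{Prop: YsystemHigherPent} by noting that the index $((\k-2)i+\k j+k(r-r'))/2$ is unchanged under the shift $(\k,2-\k,0)$ and drops by $n$ under $(n,-n,0)$, with the latter absorbed by the $n$-periodicity of $p_i\bigl(T_\k^k(A)\bigr)$ inherited from the twisted $n$-gon structure. Your additional care about parity and integrality of the index is a welcome detail the paper leaves implicit, but it is the same argument.
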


\subsection{From periodic Y-systems to quasi-periodic solutions of the octahedron relation}

We now unfold the $Y$-system solution of the previous section into a solution of the octahedron relation with suitable ``quasi-periodic" 
boundary conditions. The latter will be explicitly functions of the conserved quantities \eqref{consT}.
The main problem here is the ``inversion" of the relations
\begin{equation}\label{YtoT} Y_{i,j,k}=\frac{T_{i+1,j,k}T_{i-1,j,k}}{T_{i,j+1,k}T_{i,j-1,k}} 
\end{equation}
namely find $T$'s for given $Y$'s. Writing as before the simplest initial data for the octahedron relation in the form
$T_{i,j,i+j+1\, {\rm mod}\, 2}= x_{i,j}$, we want to find conditions on the variables $x_{i,j}$ which guarantee that the 
$Y$ variables \eqref{YtoT}, expressed in terms of the octahedron solution $T_{i,j,k}$, actually encode the $p,q$ invariants
of twisted polygons and their iterated images under the higher pentagram map. 

We saw in previous section that the corresponding $Y$ variables
must satisfy the double periodicity conditions of Corollary \ref{douper}. Conversely, given any solution of the $Y$-system with
such periodicity conditions, let us define $p,q$ invariants of a twisted polygon $A$ by \eqref{initinv}. We may
then interpret the $p$ and $q$ invariants of the $k$-th iterate of the higher pentagram map on $A$ in terms 
of the solution $Y_{i,j,k}$ and $Y_{i,j,k-1}^{-1}$.

We have the following:

\begin{thm}
Let $T_{i,j,k}$, $i+j+k=1$ mod 2, be the solution of the octahedron relation \eqref{Tsys} 
with initial conditions $T_{i,j,i+j+1\, {\rm mod}\, 2}= x_{i,j}$ for $i,j\in \Z$.
Assuming that 
\begin{eqnarray}
x_{i+\k,j+2-\k}&=&x_{i,j} \qquad (i,j\in \Z) \nonumber \\
x_{i+n,i-n}&=&x_{i,j}\times  \left\{ \begin{matrix}
\lambda^{(\k-2) i+\k j} & {\rm if}\, i+j=1\, {\rm mod}\, 2\\
\mu^{(\k-2) i+\k j}& {\rm if}\, i+j=0\, {\rm mod}\, 2\\
\end{matrix}\right. \label{initperiodi}
\end{eqnarray}
then the solution $T_{i,j,k}$ with $i+j+k=1$ mod 2 satisfies the same properties, namely $T_{i+\k,j+2-\k,k}=T_{i,j,k}$ and
\begin{equation}\label{Tperio}
T_{i+n,j-n,k}= T_{i,j,k}\times  \left\{ \begin{matrix}
\lambda^{(\k-2) i+\k j} & {\rm if}\, k=0\, {\rm mod}\, 2\\
\mu^{(\k-2) i+\k j}& {\rm if}\, k=1\, {\rm mod}\, 2\\
\end{matrix}\right.
\end{equation}
and the corresponding $Y_{i,j,k}$ of eq.\eqref{YtoT} is doubly periodic as in \eqref{perdou}. Moreover, the two integrals of motion
$O_n$ and $E_n$ of eq. \eqref{consT} are given by: $O_n=\lambda^{2\k-2}$ and $E_n=\mu^{2-2\k}$.
\end{thm}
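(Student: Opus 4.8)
The engine of the whole argument is that \eqref{Tsys} is a deterministic, translation‑invariant evolution: a function on the sublattice $i+j+k\equiv1\bmod2$ is fixed by its restriction to the staircase initial surface $k=i+j+1\bmod2$, and any integer translation of a solution is again a solution. I would first settle the exact period. The shift $\tau_1\colon(i,j,k)\mapsto(i+\k,j+2-\k,k)$ changes $i+j$ by $2$, hence preserves parity and commutes with \eqref{Tsys}, so $T\circ\tau_1$ solves the octahedron relation; on the staircase it takes the value $x_{i+\k,j+2-\k}=x_{i,j}$ by the first line of \eqref{initperiodi}. Agreeing with $T$ on a valid initial surface, it must agree everywhere, which gives $T_{i+\k,j+2-\k,k}=T_{i,j,k}$.

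For the quasi‑period I would combine uniqueness with a multiplicative gauge. Set $L(i,j)=(\k-2)i+\k j$. A direct substitution shows that $T_{i,j,k}\mapsto g_{i,j,k}T_{i,j,k}$ sends solutions of \eqref{Tsys} to solutions exactly when $h=\log g$ satisfies the discrete d'Alembert equation
\begin{equation*}
h_{i,j,k+1}+h_{i,j,k-1}=h_{i,j+1,k}+h_{i,j-1,k}=h_{i+1,j,k}+h_{i-1,j,k}.
\end{equation*}
Using $L(i,j+1)+L(i,j-1)=L(i+1,j)+L(i-1,j)=2L(i,j)$, the separable ansatz $h_{i,j,k}=L(i,j)\,c(k)$ reduces this to $c(k+1)+c(k-1)=2c(k)$, so $c$ is affine in $k$; the two boundary values imposed by the second line of \eqref{initperiodi} ($g=\lambda^{L}$ on odd sites at $k=0$ and $g=\mu^{L}$ on even sites at $k=1$) determine $c$ uniquely and produce the factor recorded in \eqref{Tperio}. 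Since $\tau_2\colon(i,j,k)\mapsto(i+n,j-n,k)$ also commutes with \eqref{Tsys}, the function $T\circ\tau_2$ is a solution; $g\,T$ is a solution by construction, and the two agree on the staircase by \eqref{initperiodi}. Uniqueness then yields \eqref{Tperio}.

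The two remaining claims are now short. Substituting \eqref{Tperio} into \eqref{YtoT} gives $Y_{i+n,j-n,k}=\frac{g_{i+1,j,k}g_{i-1,j,k}}{g_{i,j+1,k}g_{i,j-1,k}}\,Y_{i,j,k}$; the two horizontal gauge products are equal (each has exponent $2L(i,j)$ at the same level $k$, hence the same base), so the prefactor is $1$, while $\tau_1$‑periodicity of $T$ makes $Y$ exactly $\tau_1$‑periodic; together these give \eqref{perdou}. For the integrals \eqref{consT} I would use \eqref{initinv} to read each $p_m$ as a value $Y_{i,j,0}$ and telescope along the anti‑diagonal: the step $(1,-1)$ preserves the parity of $i+j$ and lowers the $p$‑index $L(i,j)/2$ by exactly one, so $O_n=\prod_{s=0}^{n-1}Y_{i_0+s,\,j_0-s,\,0}$ with $i_0+j_0$ even. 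Writing $t_{i,j}=\log T_{i,j,0}$ and regrouping, the summand $t_{i+1,j}+t_{i-1,j}-t_{i,j+1}-t_{i,j-1}$ telescopes to a difference of the boundary values at the two endpoints, which are $\tau_2$‑translates of each other; evaluating the gauge at $k=0$ (base $\lambda$) leaves
\begin{equation*}
\log O_n=\bigl(L(i_0,j_0+1)-L(i_0-1,j_0)\bigr)\log\lambda=(2\k-2)\log\lambda,
\end{equation*}
so $O_n=\lambda^{2\k-2}$, independently of $i_0,j_0,n$. The identical telescoping on the layer $k=-1$, where \eqref{initinv} presents $q_m$ as $Y_{i,j,-1}^{-1}$ (gauge base $\mu$, with an overall inversion), flips the sign and changes the base, giving $E_n=\mu^{2-2\k}$.

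The main obstacle is the gauge step, i.e.\ establishing \eqref{Tperio} itself: one must exhibit the factor $g$ that simultaneously solves the discrete d'Alembert equation and matches the prescribed staircase data, and then pass from agreement on the initial surface to agreement everywhere through uniqueness of the octahedron evolution. Once \eqref{Tperio} is available, the $Y$‑periodicity is a one‑line cancellation and the evaluation of $O_n,E_n$ is the anti‑diagonal telescoping above, which uses only the $k=0$ and $k=-1$ layers; the one point demanding care there is the bookkeeping that the endpoints of the telescoped sum are genuinely $\tau_2$‑translates, so that precisely the monodromy factors $\lambda$ (resp.\ $\mu$) survive with exponent $\pm(2\k-2)$.
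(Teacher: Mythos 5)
Your uniqueness argument for the strict period and your gauge formulation of the quasi-period are the right framework, and are in fact more explicit than the paper's own proof, which disposes of \eqref{Tperio} with the phrase ``follows immediately by induction.'' But your key step fails exactly where you declare it done. The affine solution of $c(k+1)+c(k-1)=2c(k)$ with $c(0)=\log\lambda$, $c(1)=\log\mu$ is $c(k)=(1-k)\log\lambda+k\log\mu$, so the gauge your method actually produces is $g_{i,j,k}=\lambda^{(1-k)L(i,j)}\mu^{kL(i,j)}$, where $L(i,j)=(\k-2)i+\k j$. This is \emph{not} the factor recorded in \eqref{Tperio}: that factor is $2$-periodic in $k$ (base $\lambda$ for $k$ even, base $\mu$ for $k$ odd), and a non-constant $2$-periodic $c$ never solves $c(k+1)+c(k-1)=2c(k)$; the two expressions agree only at $k=0,1$, or identically when $\lambda=\mu$. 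A one-step check confirms that the evolution propagates your affine gauge and not \eqref{Tperio}: centering \eqref{Tsys} at $(i,j,1)$ with $i+j$ odd gives $T_{i,j,2}=(x_{i,j+1}x_{i,j-1}+x_{i+1,j}x_{i-1,j})/x_{i,j}$, and substituting \eqref{initperiodi} (the four neighbours have even index sum and carry $\mu$, the denominator carries $\lambda$) yields
\begin{equation*}
T_{i+n,j-n,2}
=\frac{\mu^{2L(i,j)}\bigl(x_{i,j+1}x_{i,j-1}+x_{i+1,j}x_{i-1,j}\bigr)}{\lambda^{L(i,j)}\,x_{i,j}}
=\Bigl(\frac{\mu^{2}}{\lambda}\Bigr)^{L(i,j)}T_{i,j,2},
\end{equation*}
whereas \eqref{Tperio} asserts the factor $\lambda^{L(i,j)}$. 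So you cannot simultaneously claim that $c$ is affine and that the outcome is \eqref{Tperio}; your own (correct) d'Alembert analysis shows that \eqref{Tperio} holds as printed only when $\lambda=\mu$, i.e.\ $O_nE_n=1$, the case of honest polygons singled out in the remark following the theorem. This is a defect of the printed statement itself, which your write-up silently papers over instead of flagging and correcting.

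The mismatch is harmless for part of what follows: the double periodicity \eqref{perdou} needs only that the gauge have a single base at fixed $k$, which the affine gauge has, and your $O_n$ telescoping at level $k=0$ uses nothing beyond the given initial data, exactly as in the paper. But your $E_n$ computation genuinely breaks: you telescope at level $k=-1$ and assign that layer ``gauge base $\mu$,'' which presupposes \eqref{Tperio}. The values $T_{i,j,-1}$ are not initial data; with the correct gauge the base at $k=-1$ is $\lambda^{2}\mu^{-1}$, and your telescoping then returns $(\lambda^{2}/\mu)^{2-2\k}$, not $\mu^{2-2\k}$. The paper avoids any propagation in $k$ by evaluating the odd product on the initial data itself, at level $k=+1$: $\prod_{i=0}^{n-1}Y_{i+1,-i,1}^{-1}$ telescopes to $x_{1,1}x_{n,-n}/(x_{0,0}x_{n+1,-n+1})$, which equals $\mu^{2-2\k}$ directly from \eqref{initperiodi}; you should do the same. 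Finally, note that with the corrected gauge the identical telescoping at a general level $k$ gives $\prod Y_{\cdot,\cdot,k}=(\lambda^{1-k}\mu^{k})^{2\k-2}$ over one period of an anti-diagonal, which is independent of $k$ only when $\lambda=\mu$; so the description of $O_n$ and $E_n$ as integrals of motion carries the same caveat as \eqref{Tperio}, and your proof should state this rather than reproduce it.
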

\begin{proof}
We must choose a fundamental domain of the integer plane $\Z^2$ under translations by $(\k,2-\k)$ and $(n,-n)$. 
Let us pick two parallel lines $P_n=\{(i,-i),(i+1,-i)\}_{i\in [0,n-1]}$, and consider the periodicity condition in $Y$ that identifies $i\equiv i+n$
(see Figure \ref{Fig: GlickTorus} for an illustration: the two parallel lines are made of vertices 
$q_1,q_2,...,q_5$ and $p_1,p_2,...,p_5$ respectively).
Using the relation \eqref{YtoT}, we see that the initial data for the $Y$-system now satisfy:
$Y_{i+n,-i-n,k} =Y_{i,-i,k}$ and $Y_{i+\k,j+2-\k,k}=Y_{i,j,k}$ for $k=0,1$ by direct substitution of \eqref{initperiodi},
and moreover we can compute:
\begin{eqnarray}
\prod_{i=0}^{n-1} Y_{i,-i,0}&=&\prod_{i=0}^{n-1} \frac{x_{i+1,-i}x_{i-1,-i}} {x_{i,-i+1}x_{i,-i-1}}
= \frac{x_{-1,0}x_{n,-n+1}}{x_{0,1}x_{n-1,-n}}=\lambda^{2\k-2}\label{evencons} \\
\prod_{i=0}^{n-1} Y_{i+1,-i,1}^{-1}&=&\prod_{i=0}^{n-1} \frac{x_{i+1,-i+1}x_{i+1,-i-1}}{x_{i+2,-i}x_{i,-i}} 
= \frac{x_{1,1}x_{n,-n}}{x_{0,0}x_{n+1,-n+1}}=\mu^{2-2\k} \label{oddcons}
\end{eqnarray}
The equation \eqref{Tperio} follows immediately by induction, using the octahedron equation, and the double periodicity of the $Y_{i,j,k}$
follows by direct substitution. Finally, we recover that the products of even $Y$'s and that of odd $Y$'s are separately conserved modulo the octahedron equation, which confirms the two conserved quantities \eqref{consT}. Their values are given by (\ref{evencons}-\ref{oddcons}).
\end{proof}

\begin{remark}
As noted in \cite{GSTV}, the ordinary pentagram case studied by Glick has the two conserved quantities $O_n$, $E_n$ related via $O_nE_n=1$,
which would correspond here to choosing $\lambda,\mu$ such that $\lambda=\mu$. However, in general, the two conserved quantities can have arbitrary values.
\end{remark}

\subsection{Higher dimensional generalizations of the pentagram map}
The higher pentagram maps of \cite{GSTV} are to be distinguished from the
higher dimensional generalizations of the pentagram maps, which are maps on $n$-gons in $d$-dimensional projective space \cite{KS}. These maps are also integrable in the continuous limit, and in some cases are shown to be Adler-Gelfand-Dickii flows. Integrability is shown in these cases by presenting Lax representations with a spectral parameter \cite{KS,KS2,MB,MB2}. 

A twisted $n-$gon in $\mathbb{RP}^{d}$ is a sequence $(v_j)_{j\in\mathbb{Z}}$ in $\mathbb{RP}^{d}$ with a monodromy matrix $M\in PSL_{d+1}(\mathbb{R})$ such that $v_{j+n}=M \circ v_j$ for all $j\in \mathbb{Z}.$ We can lift it to a sequence $(V_j)_{j\in\mathbb{Z}}$ in $\mathbb{R}^{d}$ satisfying $\det(V_j V_{j-1} \dots V_{j-(d-1)})=1$ for all $j\in\mathbb{Z}.$ This implies a linear relation of the form
\begin{equation}\label{linearKS}
V_j = a_{j,1} V_{j+1} + a_{j,2} V_{j+2} + \cdots + a_{j,d} V_{j+d} + (-1)^d V_{j+d+1}, \ j\in \Z,
\end{equation}
The coefficients $a_{j,i}$ are periodic: $a_{j+n,i}=a_{j,i}$
for some $a_{j,l}\in\R$.  In the next section we will connect the coefficients of this equation with the $T$-system with special boundary conditions. The periodicity of the coefficients in the linear recursion relation will turn out to be a manifestation of the Zamolodchikov periodicity phenomonon for the $q$-characters of the Lie algebra $A_d$.

\section{The $T$-system with special boundary conditions}
\subsection{The $A_d$ $T$-system}
The octahedron relation \eqref{Tsys} is a relation between variables on $\Z^3$. There are several important examples of interesting boundary conditions for this equation. The first one comes from the representation theory of the quantum affine algebra $U_q(\widehat{sl}_{d+1})$. 

The $q$-characters \cite{FrenResh} of $U_q(\widehat{\sl}_{d+1})$ form a commutative algebra which generalizes that satisfied by the characters of $\sl_{d+1}$. 
These $q$-characters satisfy the $A_d$ $T$-system with special initial data \cite{Nakajima}. (This initial data does not play a role  in the context of this paper, so we sometimes refer to $q$-characters as the solutions of the $A_d$ $T$-system absent the specialized initial data.) We therefore refer to the $A_d$ $T$-system as the octahedron relation \eqref{Tsys} subject to the following boundary conditions:
\begin{equation}\label{ArTsys}
T_{0,j,k}=T_{d+1,j',k'}=1,\qquad j,j',k,k'\in \Z.
\end{equation} 
It is immediate from Equation \eqref{Tsys} that this implies that $T_{-1,j,k}=T_{d+2,j,k}=0$ for all $j,k$. The $T$-system is an therefore a discrete evolution equation which takes place in in a subset of $\Z^3$ consisting of a strip, defined by $0\leq i \leq d+1$ and $j,k\in \Z$.

\begin{remark}
The meaning of the indices $(i,j,k)$ in the representation theory of the quantum affine algebra is as follows. The index $i$ corresponds to one of the $r$ simple roots of $\sl_{d+1}$. The index $k$ is related to the shift in the spectral parameter carried by finite-dimensional $U_q(\widehat{\sl}_{d+1})$-modules. The $q$-character of $T_{i,j,k}$ is that of the Kirillov-Reshetikhin module which, in the classical limit, has highest weight $j \omega_i$ and spectral paramater $z q^{2k}$ for a fixed non-zero complex number $z$.
\end{remark}

\subsection{Pl\"ucker relations and conserved quantities}
The $A_d$ $T$-system is a special Pl\"ucker relation called the Desnanot-Jacobi relation or Dodgson condensation. This is a relation which gives the determinant of an $n\times n$ matrix in terms of determinants of $(n-1)\times (n-1)$ and $(n-2)\times (n-2)$ matrices. 

Let $M$ be an $n\times n$ matrix and let $M_{i_1,...,i_k}^{j_1,...,j_k}$ be the $(n-k)\times(n-k)$ minor obtained be {\em deleting} the rows $i_1,...,i_k$ and columns $j_1,...,j_k$.  Then the relation is that
\begin{equation}\label{Desnanot}
| M | | M_{1,n}^{1,n} | = |M_{1}^1| |M_n^n| - |M_1^n| | M_n^1|.
\end{equation}
This is a discrete recursion with the natural initial data that the $0\times 0$ determinant is 1 and that the $1\times 1$ determinant is the single entry in the $1\times 1$ matrix.

If $T_{0,j,k}=1$, then this relation is satisfied by solutions of the $A_d$ $T$-system.
To see this, write an arbitrary $(i+1)\times (i+1)$ matrix $M^{(i+1)}_{j,k}$ with the following notation:
\begin{equation}\label{M}
M^{(i+1)}_{j,k}= \begin{pmatrix}
x_{j,k-i} & x_{j+1,k-i+1} & \cdots& x_{j+i,k} \\
x_{j-1,k-i+1} & x_{j,k-i+2} & \cdots &x_{j+i-1,k+1} \\
\vdots & & \ddots & \vdots \\
x_{j-i,k} & x_{j-i+1,k+1} & \cdots & x_{j,k+i}
\end{pmatrix}
\end{equation}
where for economy of space we use the variables $x_{j,k}=T_{1,j,k}$. 

With this notation, we can make the identification of the minors of $M=M^{(i+1)}_{j,k}$: are (with $n=i+1$)
\begin{eqnarray*}
M_{1,n}^{1,n}&=&M^{(i-1)}_{j,k},\qquad M_1^1= M^{(i)}_{j,k+1},\qquad M_n^n=M^{(i)}_{j,k-1},\\
& &\qquad M_1^n=M^{(i)}_{j-1,k},\qquad M_n^1=M^{(i)}_{j+1,k}.
\end{eqnarray*}
With the boundary condition that $T_{0,j,k}=1$ we conclude that $T_{i,j,k}=M^{(i)}_{j,k}$ satisfies the $T$-system when $i\geq0$.
Therefore we conclude that $T_{i,j,k}$ is determinant of a matrix of size $i$, $M^{(i)}_{j,k}$. 

In the $A_d$ $T$-system, we impose the additional boundary condition that $T_{d+1,j,k}=1$, which, upon inspection, implies that $T_{d+2,j,k}=0$. This last condition means that a determinant of a matrix of size $d+2$ vanishes. When we expand this matrix along a row or a column, we get two linear recursion relations for the variables $T_{1,*,*}$ with $d+2$ terms along the two directions $j+k$ and $k-j$. We claim these the coefficients in these recursion relations are the discrete integrals of the motion in the two directions of the discrete evolution, $j+k$ and $j-k$. 

In one direction, consider the two matrices with determinant 1, $M^{(d+1)}_{j,k}$ and $M^{(d+1)}_{j+1,k+1}$. These two matrices have $d$ columns in common with each other: Only the first column of  $M^{(d+1)}_{j,k}$ does not appear among the columns of $M^{(d+1)}_{j+1,k+1}$, and only the last column of the latter does not appear among the columns of the first.

Let $V_{a}$ be the $d+1$-dimensional vector making up the first column of the matrix $M^{(d+1)}_{j,k}$. Here, $a=\lfloor \frac{j+k}{2}\rfloor-d$ is determined by the sum of the two indices of the variables in the first columns: This sum is a constant along each column. That is, $V_a = (x_{j,k-d},x_{j-1,k-d+1}, \cdots, x_{j-d,k})^t$.

With this notation, 
$$ M^{(d+1)}_{j,k} = (V_{a}, V_{a+1}, \cdots, V_{a+d}), \qquad M^{(d+1)}_{j+1,k+1} = (V_{a+1}, V_{a+2}, \cdots, V_{a+d+1}).$$
Therefore,
\begin{eqnarray*}
 |M^{(d+1)}_{j+1,k+1}| &=& 1=|(V_{a+1}, V_{a+2}, \cdots, V_{a+d+1})|\\
 &=&(-1)^d|(V_{a+d+1},V_{a+1}, V_{a+2}, \cdots, V_{a+d})|.
 \end{eqnarray*}
Taking the difference between the two determinants:
\begin{eqnarray*}
0&=& 1-1 = |M^{(d+1)}_{j,k}|-|M^{(d+1)}_{j+1,k+1}| \\
&=& |(V_{a}, V_{a+1}, \cdots, V_{a+d})| - (-1)^d |(V_{a+d+1},V_{a+1}, V_{a+2}, \cdots, V_{a+d})| \\ 
&= &
| (V_{a}- (-1)^dV_{a+d+1}, V_{a+1}, \cdots, V_{a+d})|,
\end{eqnarray*}
where in the last line we use the linearity property of the determinant in its columns.

We conclude that there is a linear relation between the columns of the  matrix in the last line.
\begin{thm}
The columns of an arbitrary $(d+1)\times (d+1)$ matrix of determinant 0, such that each of its solid minors has determinant 1, satisfy the linear recursion relation
\begin{equation}\label{Vrecursion}
0=V_a + \sum_{i=1}^{d} (-1)^i \alpha_{a,i} V_{a+i} - (-1)^dV_{a+d+1}.
\end{equation}
\end{thm}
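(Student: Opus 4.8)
The plan is to extract the recursion directly from the determinant identity established in the lines immediately preceding the statement, namely
\begin{equation*}
0 = \bigl| \bigl(V_a - (-1)^d V_{a+d+1},\ V_{a+1},\ \dots,\ V_{a+d}\bigr) \bigr|,
\end{equation*}
and to convert the vanishing of this determinant into a forced linear dependence among columns. First I would record that the hypothesis that every solid minor has determinant $1$ guarantees, in particular, that the contiguous block $(V_a, V_{a+1}, \dots, V_{a+d})$ is an invertible $(d+1)\times(d+1)$ matrix, so its $d+1$ columns form a basis of $\R^{d+1}$. Consequently any $d$ of these columns are linearly independent; in particular the block $V_{a+1}, \dots, V_{a+d}$ spans a $d$-dimensional subspace.

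Next I would invoke the standard fact that a square matrix has vanishing determinant precisely when its columns are linearly dependent. Applied to the matrix above, together with the independence of its last $d$ columns $V_{a+1}, \dots, V_{a+d}$, this forces the remaining column $V_a - (-1)^d V_{a+d+1}$ to lie in the span of $V_{a+1}, \dots, V_{a+d}$. Hence there are scalars, which I rename as $(-1)^{i}\alpha_{a,i}$ to fix the signs, with
\begin{equation*}
V_a - (-1)^d V_{a+d+1} = -\sum_{i=1}^{d} (-1)^i \alpha_{a,i}\, V_{a+i},
\end{equation*}
and this rearranges verbatim into the asserted relation \eqref{Vrecursion}.

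To pin down the coefficients $\alpha_{a,i}$ explicitly, which is what makes them candidates for conserved quantities, I would apply Cramer's rule relative to the basis $V_{a+1}, \dots, V_{a+d+1}$, whose determinant is again $1$ by the solid-minor hypothesis (it is the minor $|M^{(d+1)}_{j+1,k+1}|$). Expanding $V_a$ in that basis, each coefficient is the determinant obtained by substituting $V_a$ into the corresponding column slot; moving $V_a$ from last to first position shows that the coefficient of $V_{a+d+1}$ equals $(-1)^d$, matching the trailing term of \eqref{Vrecursion}, while the intermediate coefficients are the signed contiguous minors one calls $\alpha_{a,i}$. I do not anticipate a serious obstacle: the whole content is the passage from the determinant identity to linear dependence, so the only point needing care is the sign bookkeeping and the verification that exactly $d$ of the columns are independent, which ensures the dependence is genuinely carried by the distinguished column (with coefficient $1$ on $V_a$ and $\pm 1$ on $V_{a+d+1}$) rather than absorbed among the $V_{a+i}$. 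The periodicity and integrability of the $\alpha_{a,i}$ motivating this lemma lie outside the linear-algebra statement itself and would be treated separately.
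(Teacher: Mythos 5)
Your proposal is correct and takes essentially the same route as the paper: the paper's proof consists precisely of deriving the identity $0=\bigl|\bigl(V_a-(-1)^dV_{a+d+1},\,V_{a+1},\dots,V_{a+d}\bigr)\bigr|$ by subtracting the two unit determinants and then concluding the linear relation among the columns, which is exactly your passage from vanishing determinant to linear dependence. Your added care --- using the solid-minor hypothesis to see that $V_{a+1},\dots,V_{a+d}$ are linearly independent, so the dependence genuinely carries $V_a$ and $V_{a+d+1}$ with the stated coefficients $1$ and $-(-1)^d$ (pinned down by Cramer's rule against the unit determinant $|(V_{a+1},\dots,V_{a+d+1})|$) --- fills in sign and normalization details the paper leaves implicit, but it is the same argument.
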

We note the similarity of this relation to Equation \eqref{linearKS}. The difference so far is the fact that there is no periodicity of the coefficients $\alpha_{a,i}$. We will add this in the following sections, interpreting the various ingredients in terms of the $T$-system.

We conclude that an ordered sequence of points in $\P\R^{d}$, lifted to $\R^{d+1}$ under the the restriction that every $(d+1)\times (d+1)$ determinant of the vectors corresponding to neighboring points is equal to 1, is given by the columns of $M^{(d+1)}_{j,k}$ and satisfy a linear recursion relation of the form \eqref{Vrecursion}. Moreover the components of these vectors correspond to solutions $T_{1,j,k}$ for various $j,k$ of the $A_d$ $T$-system.

Before doing so, let us consider the question of discrete integrability of the $T$-system which follows from these recursion relations.
On the other hand we can consider this as a relation between the components of the vectors $V_b$, that is, the entries of the matrix $M$. The relation holds for each of the rows of the vectors.
The $(b+1)$-st component of the relation is
\begin{equation}\label{linearrecursion}
0=x_{j-b,k-d+b} + \sum_{i=1}^{d} (-1)^i \alpha_{a,i} x_{j-b+i,k+b+i-d} - (-1)^d x_{j-b+d+1,k+b+1} ,\quad 0\leq b \leq d,
\end{equation}
where {\em the coefficients $\alpha_{a,i}$ are independent of the row $b$}, that is, they are independent of $j-k$. 

To summarize, for any $j,k$ we have
\begin{lemma}
There is a linear recursion relation satisfied by entries of the matrix $M$:
\begin{equation}\label{recursion}
0 = x_{j,k-d} + \sum_{i=1}^d (-1)^i c_{i}(j+k) x_{j+i,k+i-d} - (-1)^d x_{j+d+1,k+1},
\end{equation}
where the coefficients $c_i(j+k)$ are independent of the difference $j-k$.
\end{lemma}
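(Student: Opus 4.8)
The plan is to read the Lemma off the column recursion \eqref{Vrecursion} of the preceding theorem applied to $M=M^{(d+1)}_{j,k}$, and then to extract the independence of $j-k$ from the elementary observation that one vector identity produces all of its component equations with a single common set of coefficients.

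First I would observe that, since every $(d+1)\times(d+1)$ solid minor of the configuration equals $T_{d+1}=1$ and is in particular nonzero, any $d+1$ consecutive columns among $V_a,\dots,V_{a+d+1}$ form a basis of $\R^{d+1}$; hence the dependence \eqref{Vrecursion} exists and its coefficients $\alpha_{a,i}$ are uniquely determined. Extracting the $(b+1)$-st entry of the vector identity \eqref{Vrecursion} for $b=0,\dots,d$ then gives the $d+1$ scalar relations \eqref{linearrecursion}, in which the very same numbers $\alpha_{a,i}$ appear in every row $b$. The row $b=0$ is already \eqref{recursion} with $c_i(j+k):=\alpha_{a,i}$, which settles the existence of the recursion.

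Next I would reindex the remaining rows: the substitution $(j,k)\mapsto(j-b,k+b)$ carries the row-$b$ relation \eqref{linearrecursion} into the row-$0$ form \eqref{recursion} based at the point $(j-b,k+b)$, whose index sum $(j-b)+(k+b)=j+k$ is unchanged while its difference drops to $j-k-2b$. Thus the coefficients of \eqref{recursion} coincide at the $d+1$ points $(j-b,k+b)$, $b=0,\dots,d$, which all sit on the antidiagonal $j+k=\mathrm{const}$. Since $(j,k)$ is arbitrary this gives $c_i(j,k)=c_i(j-1,k+1)$ at every lattice point, and chaining this identity along the antidiagonal exhibits $c_i$ as a function of $j+k$ alone.

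The step I expect to be the main obstacle is verifying that the coefficient is well defined \emph{independently of the matrix that produces it}, so that the overlapping relations above are genuinely consistent rather than merely formally matched. To close this I would compare the recursions generated by $M^{(d+1)}_{j,k}$ and $M^{(d+1)}_{j+1,k-1}$: these share the same base antidiagonal and have adjacent values of $j-k$, and a short bookkeeping shows their rows produce recursions along $d$ common lines, each relating the same $d+2$ aligned entries with the same end coefficients $1$ and $-(-1)^d$. Subtracting the two normalized relations on a common line $L$ leaves $\sum_{i=1}^{d}(-1)^i(\alpha_i-\alpha'_i)\,x^{L}_i=0$ with differences $\alpha_i-\alpha'_i$ independent of $L$; running over the $d$ common lines, the $d\times d$ array $(x^{L}_i)$ is, up to reordering, a solid minor $M^{(d)}$, nonzero by the positivity of the $T$-system solutions. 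Hence $\alpha_i=\alpha'_i$, which is exactly what upgrades ``independent of the row $b$'' to the global statement ``independent of $j-k$.''
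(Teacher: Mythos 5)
Your proof is correct and follows essentially the same route as the paper: the scalar recursions are obtained as the $d+1$ component rows of the column identity \eqref{Vrecursion}, and since a single vector identity carries a single set of coefficients $\alpha_{a,i}$, those coefficients are shared by all rows of the window, which is precisely the paper's argument for the Lemma. Where you go beyond the paper is your final paragraph. The paper passes directly from ``independent of the row $b$'' to ``independent of $j-k$,'' which, read literally, only identifies the coefficients at the $d+1$ points $(j-b,k+b)$, $0\leq b\leq d$, covered by the single matrix $M^{(d+1)}_{j,k}$; upgrading this to a globally well-defined function $c_i(j+k)$ on the whole lattice requires exactly the patching you supply --- comparing the relations produced by $M^{(d+1)}_{j,k}$ and $M^{(d+1)}_{j+1,k-1}$ on their $d$ common rows, and using the nonvanishing of the resulting solid $d\times d$ minor (whose determinant is a value $T_{d,\cdot,\cdot}$, a positive Laurent polynomial in the initial data, hence nonzero generically) to force the two normalized coefficient sets to coincide. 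So your write-up fills a step the paper's own proof leaves implicit, and is, if anything, more complete.
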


Similarly, if we compare the matrices $M^{(r+1)}_{j,k}$ and $M^{(r+1)}_{j-1,k+1}$ we will see they differ by one row, and we will find another relation between the $x_{j,k}$'s with coefficients which are independent of $j+k$.  That is,
\begin{lemma}
There is a linear recursion relation satisfied by entries of the matrix $M$:
$$
0 = x_{j,k-d} + \sum_{i=1}^d (-1)^i d_i(j-k) x_{j-i,k+i-d} - (-1)^d x_{j-d-1,k+1},
$$
where the coefficients $d_i(j-k)$ are independent of the sum $j+k$.
\end{lemma}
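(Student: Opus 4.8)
The plan is to run the argument that established \eqref{recursion} verbatim, but with the roles of the rows and columns of $M^{(d+1)}_{j,k}$ interchanged, i.e.\ replacing the diagonal shift $(j,k)\mapsto(j+1,k+1)$ by the shift $(j,k)\mapsto(j-1,k+1)$. First I would record the relevant index pattern: in $M=M^{(d+1)}_{j,k}$ of \eqref{M} the $(p,q)$ entry (with $0\le p,q\le d$) is $x_{j-p+q,\,k-d+p+q}$, so the sum $j'+k'=j+k-d+2q$ is constant along each column (as already used), while the difference $j'-k'=j-k+d-2p$ is constant along each row. Dually to the column vectors $V_a$, I would therefore introduce row vectors $W_b$ indexed by the common value of $j'-k'$ along a row, so that the rows of $M^{(d+1)}_{j,k}$ from top to bottom are $W_b,W_{b+1},\dots,W_{b+d}$ for a suitable $b$ determined by $j-k$.

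Next I would compare $M^{(d+1)}_{j,k}$ with $M^{(d+1)}_{j-1,k+1}$. A direct check of indices shows that the $(p-1)$-st row of $M^{(d+1)}_{j-1,k+1}$ coincides with the $p$-th row of $M^{(d+1)}_{j,k}$ for $1\le p\le d$: both read $x_{j-p+q,\,k-d+p+q}$, $0\le q\le d$. Hence the two matrices share the $d$ rows $W_{b+1},\dots,W_{b+d}$, while $M^{(d+1)}_{j,k}$ contributes the extra (top) row $W_b$ and $M^{(d+1)}_{j-1,k+1}$ the extra (bottom) row $W_{b+d+1}$. Since every solid $(d+1)\times(d+1)$ minor has determinant $1$, both determinants equal $1$; writing $|(\,\cdot\,)|$ for the determinant of the matrix with the listed rows, moving the last row of $M^{(d+1)}_{j-1,k+1}$ to the top as in the column argument (sign $(-1)^d$) and subtracting gives
\begin{equation*}
0 = \bigl|(W_b,W_{b+1},\dots,W_{b+d})\bigr| - (-1)^d \bigl|(W_{b+d+1},W_{b+1},\dots,W_{b+d})\bigr| = \bigl|(W_b-(-1)^d W_{b+d+1},\,W_{b+1},\dots,W_{b+d})\bigr|,
\end{equation*}
where the last equality uses multilinearity in the rows. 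Since the $d$ remaining rows are independent (they sit inside a determinant-$1$ matrix), the first row is a combination of them, yielding $W_b + \sum_{i=1}^{d}(-1)^i d_{b,i}\,W_{b+i} - (-1)^d W_{b+d+1}=0$ with coefficients $d_{b,i}$ depending only on the row index $b$, equivalently only on $j-k$.

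Finally I would read this vector identity one component at a time. The components of the $W$'s are indexed by the constant sum $j'+k'$, i.e.\ by the columns, and the coefficients $d_{b,i}$ are the \emph{same} in every component; taking the first-column component, whose entries all have sum $j+k-d$, one finds $W_b$ contributes $x_{j,k-d}$, each $W_{b+i}$ contributes $x_{j-i,k+i-d}$, and $W_{b+d+1}$ contributes $x_{j-d-1,k+1}$, producing exactly the scalar recursion of the statement with $d_i(j-k):=d_{b,i}$ independent of $j+k$. The only real work is the index bookkeeping: verifying that the shared rows are literally equal vectors and that the surviving component reproduces the displayed terms $x_{j,k-d}$, $x_{j-i,k+i-d}$, $x_{j-d-1,k+1}$ (all of sum $j+k-d$, with differences descending by $2$). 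Everything else is an exact transcription of the proof of \eqref{recursion} under the interchange of rows and columns, so this index-matching is where I expect the only, and purely routine, friction to lie.
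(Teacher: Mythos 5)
Your proof is correct and is essentially the paper's own argument: the paper disposes of this lemma in one sentence by observing that $M^{(d+1)}_{j,k}$ and $M^{(d+1)}_{j-1,k+1}$ differ by a single row and that the column argument for the first recursion carries over verbatim, which is exactly the row-vector argument you spell out, with the index bookkeeping (shared rows $x_{j-p+q,\,k-d+p+q}$, extra rows giving $x_{j,k-d}$ and $x_{j-d-1,k+1}$, coefficients constant across components) done correctly.
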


The coefficients $c$ and $d$ therefore have the interpretation of constants of the motion in the direction $j-k$ and $j+k$, respectively. The $A_d$ $T$-system is a discrete integrable system.

\subsection{Wall boundary conditions}
In this section, we consider only the $A_d$ $T$-system, that is, the octahedron relation with the boundary conditions \eqref{ArTsys} imposed on it.
We now consider the effects of further boundary conditions, in a perpendicular direction, on the solutions of the $A_d$ $T$-system. 
We call these ``wall" boundary condition. First, we impose the following conditions:
\begin{equation}\label{wall}
T_{i,0,k}=1, \qquad i,k\in\Z.
\end{equation}

\begin{remark}
There is now a question of compatibility of boundary conditions here: In fact, one can show \cite{DFKT} that simply setting initial data $T_{i,0,1}=T_{i,0,0}=1$ implies the relation \eqref{zeros}  for all  $k$. This statement as well as the Theorems quoted below are all proved using the network solution of the octahedron relation.
\end{remark}

From the octahedron relation \eqref{Tsys} it is immediate that \eqref{wall} implies that $T_{i,-1,k}=0$ for all $i,k.$ Moreover, although it is not immediately obvious from the $T$-system itself, it was shown in \cite{DFKT} that
\begin{thm}
The solutions of the equation \eqref{Tsys} with boundary conditions \eqref{ArTsys} and \eqref{wall} satisfy
\begin{equation}\label{zeros}
T_{i,-j,k}=0, \qquad 0\leq j \leq d,\  i,k\in\Z.
\end{equation}
\end{thm}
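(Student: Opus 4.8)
The plan is to avoid using the octahedron recursion \eqref{Tsys} to propagate the zeros directly: if one centers \eqref{Tsys} at a point $(i,-j,k)$ inside the expected zero region, then solving for either new unknown ($T_{i,-j-1,k}$ or $T_{i+1,-j,k}$) forces a division by an already-vanishing factor ($T_{i,-j+1,k}$ or $T_{i-1,-j,k}$), so the bare recursion yields only the tautology $0=0$ and never pins down the next value. Instead I would work entirely with the determinantal representation established above: since $T_{0,j,k}=1$, every solution satisfies $T_{i,j,k}=\det M^{(i)}_{j,k}$ with $M^{(i)}_{j,k}$ the $i\times i$ matrix of \eqref{M} with entries $x_{a,b}=T_{1,a,b}$; indexing rows and columns by $p,q\in\{0,\dots,i-1\}$, its $(p,q)$ entry is $x_{j-p+q,\,k-(i-1)+p+q}$. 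The zeros \eqref{zeros} then become the vanishing of certain of these determinants, which I would read off from the positions of the zero entries.

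Step 1, the base row $i=1$. Here $\det M^{(i)}_{0,k}=T_{i,0,k}=1$ by the wall condition \eqref{wall}, and reading these in increasing size $i=1,2,\dots$ peels off the entries $x_{-a,k}=T_{1,-a,k}$ one at a time. Indeed $\det M^{(1)}_{0,k}=x_{0,k}=1$; and in the $(i+1)\times(i+1)$ matrix the variable $x_{-i,k}$ occurs in the single entry $(i,0)$ (the only entry with first index $-i$), so by multilinearity $1=\det M^{(i+1)}_{0,k}=x_{-i,k}\,C+\det M^{(i+1)}_{0,k}\big|_{x_{-i,k}=0}$. Assuming inductively $x_{-1,k}=\dots=x_{-(i-1),k}=0$, the last matrix is upper triangular with $1$'s on the diagonal, hence its determinant is $1$, while the cofactor $C$ is, up to sign, a solid minor, i.e. a nonzero Laurent polynomial in the initial data. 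Thus $x_{-i,k}=0$, and this forces $T_{1,-j,k}=0$ for $1\le j\le d$; the process halts at $j=d$ because $\det M^{(d+2)}_{0,k}=T_{d+2,0,k}=0$ instead of $1$ (so in fact $T_{1,-(d+1),k}\ne 0$).

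Step 2, all $2\le i\le d$. Fix $1\le j\le d$ and consider $M^{(i)}_{-j,k}$, whose $(p,q)$ entry is $x_{-j-p+q,\,\bullet}$. Row $p$ has all of its first indices $-j-p+q$ (for $0\le q\le i-1$) lying in $\{-d,\dots,-1\}$ exactly when $i-j\le p\le d-j$. Taking $p=\max(0,i-j)$ one checks $0\le p\le i-1$ and $i-j\le p\le d-j$ for every $1\le i\le d$ and $1\le j\le d$. By Step 1 all entries $x_{a,b}$ with $-d\le a\le -1$ vanish, so this entire row of $M^{(i)}_{-j,k}$ is zero and $T_{i,-j,k}=\det M^{(i)}_{-j,k}=0$. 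Together with Step 1 this gives \eqref{zeros} on the interior strip $1\le i\le d$, $1\le j\le d$; the excluded cases $j=0$ and $i\in\{0,d+1\}$ are precisely where \eqref{wall} and \eqref{ArTsys} pin the value to $1$.

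The main obstacle is the nonvanishing of the cofactor $C$ in Step 1, and implicitly the legitimacy of treating the $x_{a,b}$ as algebraically independent. This is exactly where one needs the global structure of the solution rather than the bare recursion: by the network solution of \cite{DFKT} recalled above, every solid minor is a partition function of non-intersecting paths, hence a positive and in particular nonzero Laurent polynomial in the initial data, so $C\ne 0$ identically and the conclusion, being a polynomial identity, holds after any specialization. The same non-intersecting-path picture also furnishes a purely combinatorial alternative: for $(i,-j,k)$ with $1\le i,j\le d$ the truncated network admits no family of $i$ vertex-disjoint paths, so its partition function, and thus $T_{i,-j,k}$, vanishes.
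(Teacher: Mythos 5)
Your argument is correct, and it takes a genuinely different route from the paper --- which in fact contains no proof of this statement at all: the theorem is quoted from \cite{DFKT}, with the remark that it, like the mirroring formula \eqref{mirror}, is proved there using the network solution, i.e.\ by showing that the truncated network attached to a point in the band admits no family of vertex-disjoint paths, so the Lindstr\"om--Gessel--Viennot determinant vanishes; that is precisely the alternative you sketch in your last sentence. Your primary argument instead runs entirely on the determinantal representation \eqref{M} set up in the paper's Pl\"ucker-relation subsection. The details check out: in Step 1 the new variable $x_{-i,k}$ indeed occupies only the corner entry $(i,0)$, the complementary determinant becomes upper triangular with unit diagonal once the previously established zeros and the wall values are inserted, and the cofactor is $C=(-1)^i\det M^{(i)}_{1,k}=(-1)^i\,T_{i,1,k}$, a value in the half-space $j\ge 1$ where positivity of the network solution applies; the halting of the peeling at size $d+2$ (where the determinant is $0$, not $1$) correctly explains why the band has width exactly $d$; and the row-selection inequalities in Step 2 are right. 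What your route buys is a proof that is self-contained given the paper's own determinant machinery, with the network input localized to a single nonvanishing statement; what the cited network proof buys is vanishing with no genericity caveat at all, plus the mirror phenomenon \eqref{mirror} from the same computation.

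Two points of precision. First, your opening claim that the bare recursion \eqref{Tsys} yields only tautologies inside the band is overstated: centering at $j=0$ gives $T_{i,1,k}T_{i,-1,k}=0$, hence $T_{i,-1,k}=0$, and (for $d\ge 2$) centering at $j=-1$ then forces $T_{i,-2,k}=0$ outright, since the coefficient of the unknown there is the wall value $T_{i,0,k}=1$, not a vanishing one; the degeneracy you describe sets in only at the third layer, so for $d\le 2$ the recursion alone already suffices. Second --- and this concerns the theorem as stated, not just your proof --- for $d\ge 3$ the statement is actually false for arbitrary assignments satisfying \eqref{Tsys}, \eqref{ArTsys} and \eqref{wall}: setting $T_{i,-1,k}=T_{i,-2,k}=0$ and $T_{i,-3,k}=1$ for all $1\le i\le d$, $k\in\Z$, and filling the region $j\le -4$ with an independent wall-type solution, satisfies every octahedron relation. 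So the validity of the determinantal representation at $j\le 0$ (equivalently, restricting attention to the network solution of \cite{DFKT}) is a genuine hypothesis rather than a consequence of the equations. You acknowledge this implicitly when you appeal to ``the global structure of the solution,'' and your proof is exactly right within that class, but this is the one place where your write-up --- like the paper's own statement --- ought to be explicit about which solutions are meant.
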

Furthermore, there is a ``mirroring" phenomenon \cite{DFKT}:
\begin{thm}
The variables on one side of the ``wall of 1's" are determined by the variables on the other side as follows:
\begin{equation}\label{mirror}
T_{i,j,k} = (-1)^{d i} T_{d+1-i,-j-d-1,k}.
\end{equation}
\end{thm}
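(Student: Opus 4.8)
The plan is to prove the identity \eqref{mirror} as a relation between minors of a single unimodular matrix, rather than by propagating the octahedron relation. Recall that the condition $T_{0,j,k}=1$ already forces $T_{i,j,k}=\det M^{(i)}_{j,k}$, where $M^{(i)}_{j,k}$ is the contiguous minor \eqref{M} in the planar array $x_{a,b}=T_{1,a,b}$, and that the $A_d$ condition \eqref{ArTsys} is precisely the statement $\det M^{(d+1)}_{j,k}=1$ for all $j,k$. A first, purely formal, observation is that the signed reflection $\tilde T_{i,j,k}:=(-1)^{di}T_{d+1-i,-j-d-1,k}$ again solves \eqref{Tsys}: substituting and using $(-1)^{\pm d}=(-1)^{d}$, the octahedron relation for $\tilde T$ at $(i,j,k)$ turns into the octahedron relation for $T$ at $(d+1-i,-j-d-1,k)$, while $\tilde T$ also satisfies the two walls \eqref{ArTsys} and vanishes on the slab \eqref{zeros}. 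This shows that the mirror map preserves the solution class, but it is \emph{not} enough to conclude $\tilde T=T$: a low-rank check shows that the wall \eqref{wall}, the vanishing \eqref{zeros}, and \eqref{Tsys} alone leave the far side $j\le -d-1$ undetermined, so the content of \eqref{mirror} must enter through the global constraint $\det M^{(d+1)}=1$.

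The substance of the proof is thus the minor identity $\det M^{(i)}_{j,k}=(-1)^{di}\det M^{(d+1-i)}_{-j-d-1,k}$, which I would obtain from Jacobi's complementary--minor theorem applied to the unimodular matrix $A:=M^{(d+1)}$ straddling the reflection axis $j=-\tfrac{d+1}{2}$ at the appropriate location. Taking $A[I,L]$ with $I=L=\{1,\dots,i\}$ the top--left corner, one identifies $A[I,L]$ with $M^{(i)}$ (up to the harmless shift $k\mapsto k-d+i-1$), and Jacobi's identity rewrites $\det A[I,L]$ as the complementary corner minor $\det(A^{-1})[\{i+1,\dots,d+1\},\{i+1,\dots,d+1\}]$ of the inverse, the Jacobi sign $(-1)^{\sum_{s\in I}s+\sum_{s\in L}s}$ being trivial for $I=L$. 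The decisive step is then to show that, because of the wall of $1$'s and the adjacent slab of zeros, $A$ has an anti-unitriangular corner — a $1$ on the anti-diagonal with zeros beyond it — so that $A^{-1}$, after conjugation by the anti-diagonal permutation matrix $J$, is exactly the matrix $M^{(d+1)}$ built from the reflected array $\hat x_{a,b}=x_{-a-d-1,b}$. Conjugation by $J$ reverses rows and columns, carrying the bottom--right corner of $A^{-1}$ onto the top--left corner of the reflected $M^{(d+1-i)}_{-j-d-1,k}$, and it is precisely this conjugation that produces the factor $(-1)^{di}$.

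The main obstacle is this identification $J A^{-1} J=\pm M^{(d+1)}(\hat x)$ together with its sign: one must check that the cofactor entries of $A^{-1}$, which by Cramer's rule are the $d\times d$ minors $\pm\det M^{(d)}$ at reflected positions, reassemble into the reflected $M$-matrix, and track $(-1)^{di}$ through the $J$-conjugation and the row/column reversals. Equivalently — and this is the route of \cite{DFKT} — one argues on the network side: $T_{i,j,k}$ is a sum over families of $i$ non-intersecting paths on the triangular network of Figure \ref{fig:diamond}, the wall of $1$'s becomes a literal mirror line about which the network is symmetric, reflection is a weight-preserving bijection between $i$-path families for $(i,j)$ and $(d+1-i)$-path families for $(d+1-i,-j-d-1)$, and $(-1)^{di}$ is the signature of the order-reversing permutation of the $i$ endpoints induced by the reflection. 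As a fallback, either identity can be proved by induction on $i$ using the Desnanot--Jacobi relation \eqref{Desnanot}, with base cases $i=0,d+1$ supplied by \eqref{ArTsys} and the case $i=1$ being the Cramer/cofactor statement, the vanishing \eqref{zeros} guaranteeing that the degenerate terms drop out at each inductive step.
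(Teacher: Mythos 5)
Your proposal has a genuine gap, and it sits exactly where your own first paragraph points. You concede that \eqref{Tsys}, \eqref{wall} and \eqref{zeros} leave the far side $j\le -d-1$ undetermined, and you propose to recover \eqref{mirror} from the ``global constraint'' $\det M^{(d+1)}=1$. But that constraint is not additional information: given $T_{0,j,k}=1$, Desnanot--Jacobi \eqref{Desnanot} already yields $T_{i,j,k}=\det M^{(i)}_{j,k}$, so ``$\det M^{(d+1)}_{j,k}=1$ for all $j,k$'' is literally the boundary condition $T_{d+1,j,k}=1$ of \eqref{ArTsys}, i.e.\ one of the constraints you just admitted are insufficient. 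And the insufficiency is real, because the zero slab \eqref{zeros} makes every relation coupling the two sides degenerate. Take $d=1$: the constraints force $x_{-1,k}=0$ and $x_{-2,k}=-1$, but every relation involving $x_{-3,k}$ is either the vacuous identity $(-1)(-1)=0\cdot x_{-3,k}+1$ or merely defines $x_{-4,k},x_{-5,k},\dots$ recursively from $x_{-3,k}$. So one may set $x_{j,k}:=-y_{-j-2,k}$ for $j\le-2$ with $y$ an arbitrary, unrelated solution of the same wall problem; the glued array satisfies \eqref{Tsys}, \eqref{ArTsys}, \eqref{wall}, \eqref{zeros} and every determinantal identity, yet violates \eqref{mirror}. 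The same decoupling occurs for all $d$ (for $d=2$ only the far-wall value $x_{-3,k}=1$ is forced, while $x_{-4,k},x_{-5,k}$ remain free). Consequently your ``decisive step'' $JA^{-1}J=\pm M^{(d+1)}(\hat x)$ is not a verification waiting to be done: it is equivalent to the theorem and cannot be derived from unimodularity at all. There is also a positional obstruction you overlooked: complementary corners of a solid $(d+1)\times(d+1)$ minor occupy the same band of diagonals, so if the top-left $i\times i$ corner is a near-side $M^{(i)}$ (which needs $j_0\ge i$), the matrix $A$ never reaches the far side; moreover its $1$'s sit on a diagonal parallel to the main diagonal, not on the anti-diagonal, so the claimed anti-unitriangular corner does not occur. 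Your fallback induction collapses for the same reason at its base case $i=1$.

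The upshot is that \eqref{mirror} is simply not a consequence of the displayed boundary conditions; it is a property of one particular solution, and this is reflected in the paper itself: the theorem is quoted from \cite{DFKT} without proof, and the Remark preceding it states that these wall theorems ``are all proved using the network solution of the octahedron relation.'' That is precisely the route you mention only in passing. In that framework the far-side values are \emph{defined} as Laurent polynomials of the near-side initial data by the Lindstr\"om--Gessel--Viennot partition functions on the network of Figure \ref{fig:diamond}; the wall makes the weighted network reflection-symmetric, and \eqref{mirror} follows from a reflection bijection on families of non-intersecting paths. That is where all the work lies, your one-sentence gloss does not do it, and even the sign you propose there is off: reversing the order of $i$ endpoints has signature $(-1)^{i(i-1)/2}$, not $(-1)^{di}$. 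What your first paragraph does essentially prove (together with the easy check that all straddling relations are degenerate identities like $0=0$ or $1=1$) is the weaker, purely algebraic statement that the mirror extension of any near-side solution \emph{is} a solution --- existence of a mirror-symmetric continuation, not the determination of the far side by the near side that the theorem asserts.
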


An illustration of this for $\sl_3$ is shown in Figure \ref{wallofones}. In this case, we can see the $i=1$ and $i=2$ planes in $\Z^3$ can be viewed as projected to the same plane, as $j+k\in 2\Z+1$ for $i=1$ and $j+k\in 2\Z$ for $i=2$, so they cover complementary sublattices in $\Z^2$. Since $d=2$, the minus sign from Equation \eqref{mirror} does not contribute to the variables on the left of the picture.
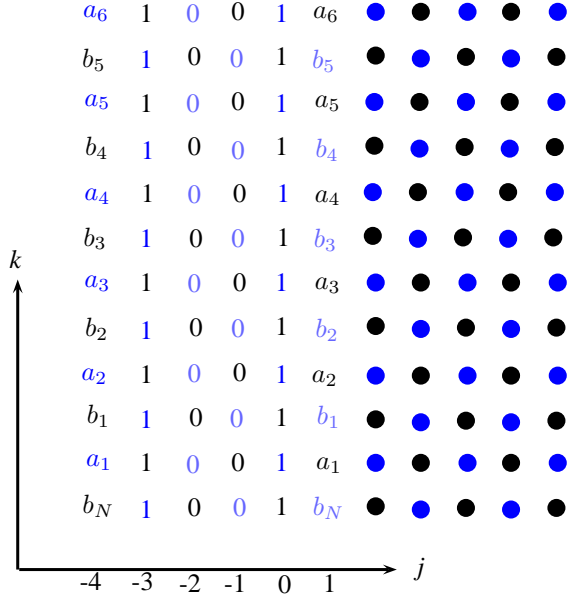
\begin{figure}
\scalebox{1} 
{
\begin{pspicture}(0,-3.99)(9.94,3.99)
\definecolor{color5302}{rgb}{0.4,0.4,1.0}
\definecolor{color5344}{rgb}{0.0,0.0,1.0}
\usefont{T1}{ptm}{m}{n}
\rput(2.47,1.995){0}
\usefont{T1}{ptm}{m}{n}
\rput(3.05,2.595){0}
\usefont{T1}{ptm}{m}{n}
\rput(2.49,0.775){0}
\usefont{T1}{ptm}{m}{n}
\rput(3.07,1.375){0}
\usefont{T1}{ptm}{m}{n}
\rput(2.49,-0.405){0}
\usefont{T1}{ptm}{m}{n}
\rput(3.07,0.195){0}
\usefont{T1}{ptm}{m}{n}
\rput(2.49,-1.605){0}
\usefont{T1}{ptm}{m}{n}
\rput(3.07,-1.005){0}
\usefont{T1}{ptm}{m}{n}
\rput(2.47,3.195){0}
\usefont{T1}{ptm}{m}{n}
\rput(3.05,3.795){0}
\usefont{T1}{ptm}{m}{n}
\rput(2.47,-2.805){0}
\usefont{T1}{ptm}{m}{n}
\rput(3.05,-2.205){0}
\usefont{T1}{ptm}{m}{n}
\rput(4.26,-2.245){$a_1$}
\usefont{T1}{ptm}{m}{n}
\rput(4.2,-1.085){$a_2$}
\usefont{T1}{ptm}{m}{n}
\rput(4.24,0.155){$a_3$}
\usefont{T1}{ptm}{m}{n}
\rput(4.24,1.315){$a_4$}
\usefont{T1}{ptm}{m}{n}
\rput(4.24,2.575){$a_5$}
\usefont{T1}{ptm}{m}{n}
\rput(4.22,3.755){$a_6$}
\usefont{T1}{ptm}{m}{n}
\rput(4.25,-1.605){\color{color5302}$b_1$}
\usefont{T1}{ptm}{m}{n}
\rput(4.23,-0.445){\color{color5302}$b_2$}
\usefont{T1}{ptm}{m}{n}
\rput(4.21,0.755){\color{color5302}$b_3$}
\usefont{T1}{ptm}{m}{n}
\rput(4.23,1.955){\color{color5302}$b_4$}
\usefont{T1}{ptm}{m}{n}
\rput(4.19,3.135){\color{color5302}$b_5$}
\usefont{T1}{ptm}{m}{n}
\rput(4.25,-2.805){\color{color5302}$b_N$}
\usefont{T1}{ptm}{m}{n}
\rput(2.45,2.575){\color{color5302}0}
\usefont{T1}{ptm}{m}{n}
\rput(3.03,3.175){\color{color5302}0}
\usefont{T1}{ptm}{m}{n}
\rput(2.47,1.355){\color{color5302}0}
\usefont{T1}{ptm}{m}{n}
\rput(3.05,1.955){\color{color5302}0}
\usefont{T1}{ptm}{m}{n}
\rput(2.47,0.175){\color{color5302}0}
\usefont{T1}{ptm}{m}{n}
\rput(3.05,0.775){\color{color5302}0}
\usefont{T1}{ptm}{m}{n}
\rput(2.47,-1.025){\color{color5302}0}
\usefont{T1}{ptm}{m}{n}
\rput(3.05,-0.425){\color{color5302}0}
\usefont{T1}{ptm}{m}{n}
\rput(2.45,3.775){\color{color5302}0}
\usefont{T1}{ptm}{m}{n}
\rput(3.07,-2.805){\color{color5302}0}
\usefont{T1}{ptm}{m}{n}
\rput(2.45,-2.225){\color{color5302}0}
\usefont{T1}{ptm}{m}{n}
\rput(3.03,-1.625){\color{color5302}0}
\usefont{T1}{ptm}{m}{n}
\rput(3.63,3.775){\color{blue}1}
\usefont{T1}{ptm}{m}{n}
\rput(3.63,2.575){\color{blue}1}
\usefont{T1}{ptm}{m}{n}
\rput(3.65,1.375){\color{blue}1}
\usefont{T1}{ptm}{m}{n}
\rput(3.63,-1.025){\color{blue}1}
\usefont{T1}{ptm}{m}{n}
\rput(3.63,0.195){\color{blue}1}
\usefont{T1}{ptm}{m}{n}
\rput(3.63,-2.205){\color{blue}1}
\usefont{T1}{ptm}{m}{n}
\rput(3.63,3.195){1}
\usefont{T1}{ptm}{m}{n}
\rput(3.63,1.995){1}
\usefont{T1}{ptm}{m}{n}
\rput(3.65,0.795){1}
\usefont{T1}{ptm}{m}{n}
\rput(3.63,-1.605){1}
\usefont{T1}{ptm}{m}{n}
\rput(3.63,-0.385){1}
\usefont{T1}{ptm}{m}{n}
\rput(3.63,-2.785){1}
\usefont{T1}{ptm}{m}{n}
\rput(1.83,-2.825){\color{blue}1}
\usefont{T1}{ptm}{m}{n}
\rput(1.83,3.155){\color{blue}1}
\usefont{T1}{ptm}{m}{n}
\rput(1.85,1.955){\color{blue}1}
\usefont{T1}{ptm}{m}{n}
\rput(1.83,-0.445){\color{blue}1}
\usefont{T1}{ptm}{m}{n}
\rput(1.83,0.775){\color{blue}1}
\usefont{T1}{ptm}{m}{n}
\rput(1.83,-1.625){\color{blue}1}
\usefont{T1}{ptm}{m}{n}
\rput(1.83,3.775){1}
\usefont{T1}{ptm}{m}{n}
\rput(1.83,2.575){1}
\usefont{T1}{ptm}{m}{n}
\rput(1.85,1.375){1}
\usefont{T1}{ptm}{m}{n}
\rput(1.83,-1.025){1}
\usefont{T1}{ptm}{m}{n}
\rput(1.83,0.195){1}
\usefont{T1}{ptm}{m}{n}
\rput(1.83,-2.205){1}
\usefont{T1}{ptm}{m}{n}
\rput(1.2,-2.205){\color{color5344}$a_1$}
\usefont{T1}{ptm}{m}{n}
\rput(1.14,-1.065){\color{blue}$a_2$}
\usefont{T1}{ptm}{m}{n}
\rput(1.18,0.175){\color{blue}$a_3$}
\usefont{T1}{ptm}{m}{n}
\rput(1.18,1.335){\color{color5344}$a_4$}
\usefont{T1}{ptm}{m}{n}
\rput(1.18,2.595){\color{color5344}$a_5$}
\usefont{T1}{ptm}{m}{n}
\rput(1.16,3.775){\color{blue}$a_6$}
\usefont{T1}{ptm}{m}{n}
\rput(1.19,-1.585){$b_1$}
\usefont{T1}{ptm}{m}{n}
\rput(1.17,-0.425){$b_2$}
\usefont{T1}{ptm}{m}{n}
\rput(1.15,0.775){$b_3$}
\usefont{T1}{ptm}{m}{n}
\rput(1.17,1.975){$b_4$}
\usefont{T1}{ptm}{m}{n}
\rput(1.13,3.155){$b_5$}
\usefont{T1}{ptm}{m}{n}
\rput(1.19,-2.785){$b_N$}
\psline[linewidth=0.04cm,arrowsize=0.05291667cm 2.0,arrowlength=1.4,arrowinset=0.4]{->}(0.14,-3.63)(5.16,-3.63)
\usefont{T1}{ptm}{m}{it}
\rput(5.48,-3.625){j}
\psline[linewidth=0.04cm,arrowsize=0.05291667cm 2.0,arrowlength=1.4,arrowinset=0.4]{->}(0.12,-3.63)(0.12,0.23)
\usefont{T1}{ptm}{m}{it}
\rput(0.09,0.495){k}
\usefont{T1}{ptm}{m}{n}
\rput(3.67,-3.825){0}
\usefont{T1}{ptm}{m}{n}
\rput(4.27,-3.805){1}
\usefont{T1}{ptm}{m}{n}
\rput(3.01,-3.805){-1}
\usefont{T1}{ptm}{m}{n}
\rput(2.41,-3.805){-2}
\usefont{T1}{ptm}{m}{n}
\rput(1.78,-3.785){-3}
\usefont{T1}{ptm}{m}{n}
\rput(1.09,-3.785){-4}
\psdots[dotsize=0.24,linecolor=blue](4.88,-2.21)
\psdots[dotsize=0.24,linecolor=blue](4.88,-1.05)
\psdots[dotsize=0.24,linecolor=blue](4.88,0.19)
\psdots[dotsize=0.24,linecolor=blue](4.84,1.39)
\psdots[dotsize=0.24,linecolor=blue](4.86,2.59)
\psdots[dotsize=0.24,linecolor=blue](4.88,3.79)
\psdots[dotsize=0.24,linecolor=blue](5.48,-2.83)
\psdots[dotsize=0.24,linecolor=blue](5.48,-1.67)
\psdots[dotsize=0.24,linecolor=blue](5.48,-0.43)
\psdots[dotsize=0.24,linecolor=blue](5.44,0.77)
\psdots[dotsize=0.24,linecolor=blue](5.46,1.97)
\psdots[dotsize=0.24,linecolor=blue](5.48,3.17)
\psdots[dotsize=0.24,linecolor=blue](6.1,-2.21)
\psdots[dotsize=0.24,linecolor=blue](6.1,-1.05)
\psdots[dotsize=0.24,linecolor=blue](6.1,0.19)
\psdots[dotsize=0.24,linecolor=blue](6.06,1.39)
\psdots[dotsize=0.24,linecolor=blue](6.08,2.59)
\psdots[dotsize=0.24,linecolor=blue](6.1,3.79)
\psdots[dotsize=0.24](4.88,-2.79)
\psdots[dotsize=0.24](4.88,-1.63)
\psdots[dotsize=0.24](4.88,-0.39)
\psdots[dotsize=0.24](4.84,0.81)
\psdots[dotsize=0.24](4.86,2.01)
\psdots[dotsize=0.24](4.88,3.21)
\psdots[dotsize=0.24](5.48,-2.21)
\psdots[dotsize=0.24](5.48,-1.05)
\psdots[dotsize=0.24](5.48,0.19)
\psdots[dotsize=0.24](5.44,1.39)
\psdots[dotsize=0.24](5.46,2.59)
\psdots[dotsize=0.24](5.48,3.79)
\psdots[dotsize=0.24](6.08,-2.81)
\psdots[dotsize=0.24](6.08,-1.65)
\psdots[dotsize=0.24](6.08,-0.41)
\psdots[dotsize=0.24](6.04,0.79)
\psdots[dotsize=0.24](6.06,1.99)
\psdots[dotsize=0.24](6.08,3.19)
\psdots[dotsize=0.24,linecolor=blue](6.68,-2.83)
\psdots[dotsize=0.24,linecolor=blue](6.68,-1.67)
\psdots[dotsize=0.24,linecolor=blue](6.68,-0.43)
\psdots[dotsize=0.24,linecolor=blue](6.64,0.77)
\psdots[dotsize=0.24,linecolor=blue](6.66,1.97)
\psdots[dotsize=0.24,linecolor=blue](6.68,3.17)
\psdots[dotsize=0.24,linecolor=blue](7.3,-2.21)
\psdots[dotsize=0.24,linecolor=blue](7.3,-1.05)
\psdots[dotsize=0.24,linecolor=blue](7.3,0.19)
\psdots[dotsize=0.24,linecolor=blue](7.26,1.39)
\psdots[dotsize=0.24,linecolor=blue](7.28,2.59)
\psdots[dotsize=0.24,linecolor=blue](7.3,3.79)
\psdots[dotsize=0.24](6.68,-2.21)
\psdots[dotsize=0.24](6.68,-1.05)
\psdots[dotsize=0.24](6.68,0.19)
\psdots[dotsize=0.24](6.64,1.39)
\psdots[dotsize=0.24](6.66,2.59)
\psdots[dotsize=0.24](6.68,3.79)
\psdots[dotsize=0.24](7.28,-2.81)
\psdots[dotsize=0.24](7.28,-1.65)
\psdots[dotsize=0.24](7.28,-0.41)
\psdots[dotsize=0.24](7.24,0.79)
\psdots[dotsize=0.24](7.26,1.99)
\psdots[dotsize=0.24](7.28,3.19)
\usefont{T1}{ptm}{m}{n}
\usefont{T1}{ptm}{m}{n}
\end{pspicture} 
}
\caption{An illustration of the mirroring phenomenon for $d=2$. The circles denote the variables $T_{i,j,k}$ where $i\in \{1,2\}$ and $j>0$.}
\label{wallofones}
\end{figure}

\begin{remark}
Note that this is consistent with the linear recursion relations of
Lemmas
\ref{recursion} when we interpret it as a recursion relation for $T_{1,j,k}$:
\begin{equation}\label{Trecursion}
0=T_{1,j,k} + \sum_{i=1}^d (-1)^i c_i(j+k)T_{1,j+i,k+i} - (-1)^d T_{1,j+d+1,k+d+1},\quad j+k\in 2\Z+1,
\end{equation}
which has $d+2$ terms, with the first coefficient  equal to 1 and the last coefficient equal to $(-1)^{d+1}$. When $j=-d-1$, for any  $k$ and $i=1$, all the entries in the sum in the middle vanish, and we have $(-1)^{d} - (-1)^d = 0$.
\end{remark}

\subsection{Identification of conserved quantities with solutions of the $T$-system with wall boundary conditions}
We will now show that, under the ``wall" boundary conditions \eqref{wall}, the conserved quantities $c_i(j+k)$ in Equation \eqref{recursion} are in fact solutions of the $A_d$ $T$-system, in particular, those with $j=1$.
\begin{thm}
The coefficients $c_i(j+k)$ are equal to the values of $T_{i',1,k'}$ along the boundary $j=1$, with $i'=d+1-i$ and $k'=k+d+i$.
\end{thm}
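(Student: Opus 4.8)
The plan is to combine the determinantal description $T_{i,j,k}=\det M^{(i)}_{j,k}$ with the wall boundary conditions \eqref{wall} and their consequence \eqref{zeros}. First I would recover the coefficients as maximal minors by Cramer's rule. Writing $W=(V_a,V_{a+1},\dots,V_{a+d+1})$ for the $(d+1)\times(d+2)$ matrix whose columns are the consecutive vectors appearing in \eqref{Vrecursion}, its kernel is one-dimensional and spanned by the vector of signed maximal minors. Comparing this with \eqref{Vrecursion} and using that every block of $d+1$ consecutive columns has determinant $1$ (such a block is a solid minor $M^{(d+1)}_{\cdot,\cdot}$, hence equals $T_{d+1,\cdot,\cdot}=1$ by \eqref{ArTsys}), I obtain
$$c_i(j+k)=\alpha_{a,i}=\det\big(V_a,\dots,V_{a+i-1},V_{a+i+1},\dots,V_{a+d+1}\big),$$
the maximal minor with the single column $V_{a+i}$ deleted. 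The signs cancel precisely because the two extreme minors $\det(V_{a+1},\dots,V_{a+d+1})$ and $\det(V_a,\dots,V_{a+d})$ are both equal to $1$.

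Next I would use the Lemma, which guarantees that $c_i$ depends only on $j+k$, to evaluate this gapped minor at a convenient representative on the line $j+k=\mathrm{const}$. Since the $r$-th entry of $V_{a+s}$ is $x_{j+s-r,\,k+s-d+r}$, choosing the representative with $j=0$ places the first block of columns against the wall: by \eqref{wall} the diagonal entries $x_{0,\cdot}$ equal $1$, and by \eqref{zeros} all entries $x_{-m,\cdot}$ with $1\le m\le d$ vanish. Consequently each of the columns $V_a,\dots,V_{a+i-1}$ is supported on rows $0,\dots,i-1$, is upper-triangular there with $1$'s on the diagonal, and the lower-left block (rows $\ge i$ of these columns) is identically zero.

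Block-triangularity then factorizes the determinant: the upper-left $i\times i$ block contributes $1$, so $c_i(j+k)$ equals the complementary lower-right $(d+1-i)\times(d+1-i)$ minor built from $V_{a+i+1},\dots,V_{a+d+1}$. A direct index match against \eqref{M} identifies this block as the solid minor $M^{(d+1-i)}_{1,k'}$, i.e. as $T_{d+1-i,1,k'}$: the first index reads off as $i'=d+1-i$, the wall offset pins the underlying column index to $j'=1$, and $k'$ is recovered by tracking the second subscript through the identification, giving the value claimed in the statement.

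I expect the main obstacle to be the bookkeeping in these last two steps: verifying that the lower-left block vanishes for all $i$ columns of the first block simultaneously (this is exactly where the full strip of zeros \eqref{zeros}, and not merely the single relation $T_{i,-1,k}=0$, is indispensable), and matching the surviving block to a solid minor with the correct shift so that $i'=d+1-i$, $j'=1$ and $k'$ emerge as stated. The sign in the Cramer step and the orientation of the collapse also require care, since the zeros lie only on the $j<0$ side, so it is necessarily the first block of columns, rather than the last, that degenerates; once the determinantal picture is fixed these become routine.
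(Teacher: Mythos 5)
Your proposal is correct and takes essentially the same route as the paper: your signed maximal minors of the rectangular matrix $(V_a,\dots,V_{a+d+1})$ are precisely the first-row cofactors of the paper's singular square matrix $M^{(d+2)}_{1,k+d+1}$ (delete the first row and you recover your matrix, positioned at the wall), and both arguments then use the strip of zeros \eqref{zeros} to collapse each such minor, by block triangularity against a unit-triangular block, to the solid minor $M^{(d+1-i)}_{1,k'}=T_{d+1-i,1,k'}$. Carrying out the bookkeeping you deferred yields $c_i(m)=T_{d+1-i,1,m+i+1}$, which is exactly what the paper's proof obtains ($c_i(k+d+1)=T_{d+1-i,1,k+d+i+2}$), so nothing is missing; the only discrepancy is that the theorem's displayed offset $k'=k+d+i$ agrees with this formula only when $d=2$, an inconsistency internal to the paper rather than a flaw in your argument.
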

\begin{proof}
Consider the $(d+2)\times (d+2)$ matrix $M^{(d+2)}_{1,k+d+1}$ under the boundary conditions \eqref{wall}. Due to Theorem \ref{zeros} about the vanishing of the solutions when $-d-1<j<0$ this matrix has the following form:
\begin{equation}
M:=M^{(d+2)}_{1,k+d+1}=\begin{pmatrix}
x_{1,k} & x_{2,k+1} & x_{3, k+2} & x_{4,k+3} & \cdots & x_{d+2,k+d+1}\\
1 &         x_{1,k+2} & x_{2,k+3} & x_{3,k+4} &\cdots & x_{d+1,k+d+2}\\
0 & 1 &                    x_{1,k+4} & x_{2,k+5} & \cdots & x_{d,k+d+3} \\
0 & 0 & 1 & x_{1,k+6} & \cdots & x_{d-1,k+d+4}\\
\ldots & & & \ddots &  \cdots & \\
0 & 0 & \cdots & 0 & 1 & x_{1,k+2(d+1)}
\end{pmatrix}
\end{equation}
where we denoted $x_{j,k}=T_{1,j,k}$. The determinant of this matrix is $T_{d+2,1,k+d+1}=0$ so it vanishes. Moreover it has a one-dimensional kernel, by definition of the $A_d$ $T$-system (the determinants of any of the solid $(d+1)\times (d+1)$ minors are equal to 1). We solve the nullspace equation
$$
M \vec a = \vec0
$$
 for $\vec a = (a_0,a_1,...,a_{d+1})$. 
On the one hand, the entries of this vector are the coefficients (up to normaliation) of the linear recursion relation:
$$\sum_{i=0}^{d+1} a_i x_{i+1,k+i}=0$$
so we may identify $a_i = (-1)^i c_i(k+d+1)$.

On the other hand, expanding the determinant along the first row, we
have that $a_i=(-1)^i |M_1^{(d+i-1)}|$. The determinant of the minor is
particularly simple to compute. It is equal to the determinant of the
$(d+1-i)\times (d+1-i)$-matrix $M^{(d+1-i)}_{1,k+d+i+2}$, which is, by
definition, $T_{d+1-i,1,k+d+i+2}$.

\end{proof}
\subsection{Linear recursion relations under Zamolodchikov periodicity}
We have now identified the coefficients in the linear recursion relation \eqref{linearKS} as solutions of the $T$-system, as long as we do not require them to be periodic. To see how periodicity enters the picture in the context of $T$-systems, we refer to the Zamolodchikov periodicity phenomenon.

We must impose one final additional boundary condition on the $A_d$ $T$-system, of the form of a second ``wall of 1's" of the same form as \eqref{wall}, but positioned at $j=\ell+1$ where $\ell\geq 2$. That is, we look at the octahedron equation restricted to the $A_d$ slice with boundary conditions \eqref{ArTsys}, together with the wall boundary conditions at $j=0$ \eqref{wall} and
\begin{equation}\label{wall2}
T_{i,\ell+1,k}=1.
\end{equation}

There remains an evolution of the $T$-system under these boundary conditions in the $k$-direction, taking place in a tube with walls at $j=0,\ell+1$ and $i=0,d+1$. A valid initial data for this system are the values of the function $T$ in the finite set of points ($d \times \ell$ of these) $\{(i,j,i+j\mod 2: 1\leq i \leq d, 1\leq j \leq \ell\}$. All other values of $T$ are positive Laurent polynomials in this initial data, due to the Laurent property of cluster algebras.

An important result about these boundary conditions, originally conjectured by Zamolodchikov \cite{Z} and later proven in various contexts \cite{DFKT,IIKKN}, is that there is a periodicity phenomenon:
\begin{thm}
The solutions of the $A_d$ $T$-system $T_{i,j,k}$ under the boundary conditions \eqref{wall} and \eqref{wall2} satisfy a periodicity phenomonon:
\begin{equation}
T_{i,j,k+p} = T_{d+1-i, \ell+1-j,k},  \qquad p=\ell+d+2
\end{equation}
so that
\begin{equation}\label{periodicity}
T_{i,j,k+2p} = T_{i,j,k}.
\end{equation}
\end{thm}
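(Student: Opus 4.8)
The plan is to identify the doubly-walled $A_d$ $T$-system with a $T$-system of finite type and to read the glide symmetry off the Zamolodchikov periodicity of the associated $Y$-system. First I would restrict the octahedron relation \eqref{Tsys} to the tube $1\le i\le d$, $1\le j\le \ell$, $k\in\Z$ carved out by the four walls \eqref{ArTsys}, \eqref{wall} and \eqref{wall2}; on each $k$-slice of fixed parity the cluster variables sit on the $d\times\ell$ grid and the exchange matrix is the truncation of \eqref{bmatrix} by the walls. I would check that this quiver is the square product $A_d\,\square\,A_\ell$ of the two Dynkin quivers, and that the $k$-evolution of \eqref{Tsys} is generated by the bipartite mutation sequence on it, so that advancing $k$ by $2$ is one turn of the associated bipartite belt.

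Passing to the $Y$-variables \eqref{Ysys} as in \cite{KNS}, the restricted system becomes the $(A_d,A_\ell)$ $Y$-system, and the statement I need is the periodicity theorem of \cite{IIKKN,Z}: the bipartite belt of this finite-type $Y$-system closes after $h+h'$ turns, where $h=d+1$ and $h'=\ell+1$ are the Coxeter numbers of $A_d$ and $A_\ell$, and the half-period acts on the slice by $w_0^{A_d}\times w_0^{A_\ell}$, i.e. by $i\mapsto d+1-i$, $j\mapsto \ell+1-j$. Since $p=\ell+d+2=h+h'$, this yields $Y_{i,j,k+p}=Y_{d+1-i,\ell+1-j,k}$ at the level of the $Y$-system, and $Y_{i,j,k+2p}=Y_{i,j,k}$ upon iterating.

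To pass from $Y$ back to $T$ I would set $S_{i,j,k}:=T_{d+1-i,\ell+1-j,k}$ and $U_{i,j,k}:=T_{i,j,k+p}$ and show $S=U$. Both solve \eqref{Tsys} (the reflection $i\mapsto d+1-i$, $j\mapsto \ell+1-j$ is a symmetry of the octahedron relation, and a $k$-shift obviously is), both satisfy all four wall normalizations equal to $1$ — for $S$ this uses the mirroring symmetry \eqref{mirror} at each wall, which swaps $i\leftrightarrow d+1-i$ and $j\leftrightarrow \ell+1-j$ while carrying the value $1$ on a wall to the value $1$ on the opposite wall — and by the previous paragraph they induce the same $Y$-system, $Y^S_{i,j,k}=Y_{d+1-i,\ell+1-j,k}=Y_{i,j,k+p}=Y^U_{i,j,k}$. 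Because the projective map $T\mapsto Y$ of \eqref{YtoT} has only a gauge ambiguity, and because that ambiguity is killed by the wall normalization together with the condition that every solid minor of \eqref{M}-type has determinant $1$, the normalized unfolding is unique; hence $S=U$, which is the half-period statement, and \eqref{periodicity} follows by applying it twice.

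The crux is the imported periodicity theorem — that the bipartite belt of the $(A_d,A_\ell)$ $Y$-system closes after exactly $h+h'=p$ turns; the rest is bookkeeping with symmetries. If one wants a self-contained argument rather than this import, I would follow the network approach of \cite{DFKT}: write $T_{i,j,k}=\det M^{(i)}_{j,k}$, reduce by the vanishing and mirroring results to the single row $T_{1,j,k}$, and recast the claim as the triviality, over one period, of the monodromy of the order-$(d+1)$ linear recursion \eqref{Vrecursion} whose coefficients are the boundary values $T_{d+1-i,1,\cdot}$, now forced to be periodic by the second wall via \eqref{mirror}. Proving that this Coxeter-type monodromy has order $p$ is then the combinatorial heart of the matter, which \cite{DFKT} establishes by a reflection argument on families of non-intersecting lattice paths on the network of Figure \ref{fig:network}; this is the step I expect to be the main obstacle.
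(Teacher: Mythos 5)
Your reduction of the walled system to the $(A_d,A_\ell)$ $Y$-system and the import of Zamolodchikov periodicity (half-period $h+h'=p$ acting by the double flip $i\mapsto d+1-i$, $j\mapsto \ell+1-j$) is sound as far as it goes, and matches one of the two sources the paper cites. The genuine gap is in your unfolding step: it is \emph{not} true that the gauge ambiguity of the map \eqref{YtoT} is killed by the wall normalizations. On solutions of \eqref{Tsys} satisfying \eqref{ArTsys}, \eqref{wall} and \eqref{wall2}, the kernel of $T\mapsto Y$ consists of multiplicative gauges $g$ with $g_{i+1,j,k}g_{i-1,j,k}=g_{i,j+1,k}g_{i,j-1,k}=g_{i,j,k+1}g_{i,j,k-1}$ and $g\equiv 1$ on all four walls (these are exactly the transformations preserving both the octahedron relation and the $Y$'s), and this space is nontrivial whenever $\gcd(d+1,\ell+1)>1$: expanding $\log g$ on each slice in Dirichlet sine modes, a mode $(m,m')$ survives precisely when $m/(d+1)=m'/(\ell+1)$. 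The simplest counterexample is $d=\ell=1$: the walled system is the single recurrence $T_{1,1,k+1}T_{1,1,k-1}=2$, whose solutions form the two-parameter family $(T_{1,1,0},T_{1,1,1})=(a,b)$, $T_{1,1,k+2}=2/T_{1,1,k}$, while $Y_{1,1,k}\equiv 1$ for \emph{every} member of this family. So ``both solve \eqref{Tsys}, both have walls of $1$'s, and $Y^S=Y^U$'' does not imply $S=U$; and since $S=U$ is literally the statement being proved, your argument does not close. (The theorem itself survives in this example --- every member of the family has period $4=\ell+d+2$ --- it is only the inference that fails.) This is precisely the known gap between $Y$-system and $T$-system periodicity: \cite{IIKKN} bridge it using cluster algebras with coefficients ($F$-polynomials and tropical sign-coherence), while \cite{DFKT} bridges it with explicit network/determinant solutions; bare $Y$-periodicity plus boundary normalization is not enough.

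For comparison, the paper does not argue through the $Y$-system at all: it obtains the statement from the mirroring phenomenon \eqref{mirror} applied at the wall $j=0$ together with its analogue at $j=\ell+1$, citing \cite{DFKT} (and \cite{IIKKN}) for the proofs --- and the mirror identities are exactly the kind of rigid statement that pins down the gauge your argument leaves free. Consequently, your fallback plan (the network computation \`a la \cite{DFKT}: reduce to the row $i=1$ and show the monodromy of the recursion \eqref{Vrecursion} is trivial over one period) is not an alternative route but essentially the cited proof, and you correctly identify it as where the real work lies. To salvage your main route you would need either to restrict to $\gcd(d+1,\ell+1)=1$ and actually prove gauge-triviality there, or to supplement $Y$-periodicity with an argument that controls the gauge --- for instance the mirror identities themselves, or the coefficient/$F$-polynomial machinery of \cite{IIKKN} or \cite{Keller}.
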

This can be visualized as a mirroring along the line $j=-(d+1)/2$ (as implied by Equation \eqref{mirror}) as well as $j=\ell+1+(d+1)/2$, by symmetry. See \cite{DFKT} for a detailed analysis using the network formulation.
\begin{figure}
\input{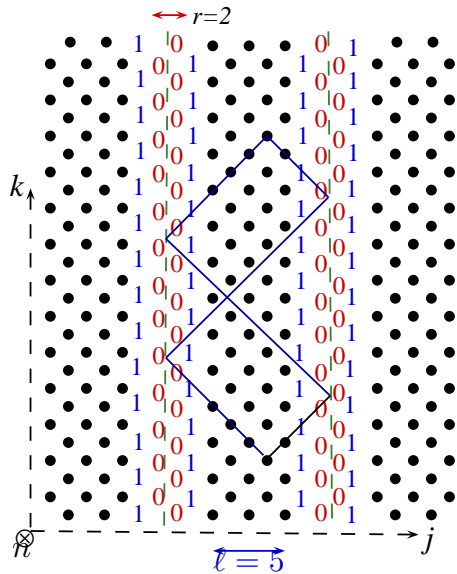}
\caption{An illustration of the how the mirroring along two lines implies periodicity for $d=2$, $\ell=5$. The $T$-variable at the bottom has two lines emanating from it, in the $j-k$ and $j+k$ directions, and these are mirrored a the dashed ``mirror" lines. Where they intersect at the top, the $T$-variable is equal to the original one at the bottom of the picture.}
\label{fig:periodic}
\end{figure}

In this situation, for fixed $\ell$ and $d$, we have a periodicity of the coefficients $c_i(j+k)$. We have identified the coefficients $\alpha_{a,i}$ in Equation \eqref{Vrecursion} with $T_{i',1,k}$, with $a=\lfloor \frac{j+k}{2}\rfloor -d$. Therefore we have
$$
\alpha_{a+p,i}=\alpha_{a,i}, \qquad p=\ell+d+2.
$$

Thus, the integrals of the motion in the case with wall boundary conditions are also periodic.
The number of integrals $\alpha_{a,i}$ is $p\times d= (\ell+d+2)d$. However these integrals are not algebraically independent in the periodic case, because they are determined by the set of initial data, which contains $\ell\times d$ indepenent data. Therefore, there are $d(d+2)$ relations between the variables $\alpha_{a,i}$. 

The vectors $V_a$ are also periodic in this case: $V_{a+p}=(-1)^d V_a$. They can be visualized as the collection of variables along a diagonal connecting $d+1$ points from NW to SE on the lattice of $T_{1,j,k}$'s. Then the set of such collections, translated with respect to each other by the vector $(1,1)$ in the $(j,k)$ direction, is periodic with period $p$ (recall that $a\sim (j+k)/2$).
We can therefore interpret the vectors $V_a$ as the lifts of the vertices of a closed $n$-gon in projective space.

\begin{remark}
There is a more general quasi-periodicity phenomonon corresponding to $n$-gons with monodromy for $T$-systems. This can be seen indirectly from the fact that (a) it is always possible to unfold the $Y$-variables corresponding to an arbitrary $T$-system such that periodic $Y$-systems unfold to quasi-periodic $T$-systems (as in Section 2) and (b) Zamolodchikov periodicity for $Y$-systems has been proven \cite{Keller}.
\end{remark}
\section{Conclusion}
In this paper, our interest was to exhibit two seemingly unrelated
relations between the $T$-system and the various versions of the
pentagram maps, which are all discrete integrable systems. The first
relation, discussed in Section 2, shows that the (generalized)
pentagram map in projective 2-space is in fact the octahedron relation
with special, quasi-periodic boundary conditions. Therefore
integrability follows from the fact that the octahedron relation is
known to be integrable. Integrability was proved in general for this
map by \cite{GSTV} by more classical means.

The second relation is simply the identification of the invariants
which are the periodic coefficients of the lifted coordinates of the
$n$-gon in projective $d$-space for general $d$ are solutions of the
$A_d$ $T$-system, that the fact that they are not algebraically
independent follows from the evolution of this $T$-system, and that
their periodicity is directly related to the Zamolodchikov periodicity
phenomenon.

The pentagram map is more generally defined for polygons with a
monodromy, not necessarily closed polygons corresponding to
periodicity. It would be interesting to see if there are compatible
boundary conditions for the $T$-system which reflect this more general
type of periodicity.

Moreover, the pentagrm map acts on projective variables, which are
related to ratios of solutions of the $T$-system. These satisfy the
$Y$-system. To get the most general type of $Y$-system solutions, one
must introduce coefficients into the $T$-system in a sufficiently
generic fashion. The solution of the $T$-system must be generalized in
this case. This will be the subject of a forthcoming publication.


\end{document}